\documentclass[journal,subeqn,twocolumn]{IEEEtran}
\usepackage{amsfonts,amssymb,amsmath,amsthm}
\usepackage{calrsfs}
\usepackage{latexsym}
\usepackage{epsfig}
\usepackage{epstopdf}
\usepackage[font=footnotesize]{subfig}
\usepackage{cite}
\usepackage{graphicx}
\usepackage{subeqn}
\usepackage{xcolor}
\usepackage{algorithmicx, algpseudocode}


%
%
%
\setlength\abovedisplayskip{0.07in}
\setlength\belowdisplayskip{0.07in}
\setlength\textfloatsep{5pt}

%
%
%



\newtheorem{prob-statement}{Problem}
\newtheorem{defn}{Definition}
\newtheorem{lemma}{Lemma}

\newtheorem{thrm}{Theorem}

\DeclareMathOperator*{\argmax}{arg\,max}

\DeclareMathOperator*{\st}{\text{subject to}\,}
\DeclareMathOperator*{\maximize}{\text{maximize}\,}

\newcommand{\nchoosek}[2]{\left(\!\! \begin{array}{c} #1 \\ #2 \end{array} \!\!\right)}

\DeclareMathAlphabet{\mathpzc}{OT1}{pzc}{m}{it}

\newcommand{\ignore}[2]{\hspace{0in}#2}

\newcounter{MYtempeqncnt}

\newenvironment{redtext}{\par\color{black}}{\par}
\newenvironment{magentatext}{\par\color{black}}{\par}
\newenvironment{bluetext}{\par\color{black}}{\par}

\begin{document}
	\IEEEoverridecommandlockouts
\title{Optimal Spectrum Auction Design with Two-Dimensional Truthful Revelations under Uncertain Spectrum Availability}
        \author{
        \IEEEauthorblockN{V. Sriram Siddhardh Nadendla\IEEEauthorrefmark{1}, Swastik Brahma\IEEEauthorrefmark{1} and Pramod K. Varshney\IEEEauthorrefmark{1}}
        \\
        \IEEEauthorblockA{\IEEEauthorrefmark{1}Department of EECS, Syracuse University, Syracuse, NY 13244\\ Email: \{vnadendl, skbrahma, varshney\}@syr.edu}
        }

        \maketitle

\begin{abstract}
In this paper, we propose a novel sealed-bid auction framework to address the problem of dynamic spectrum allocation in cognitive radio (CR) networks. We design an optimal auction mechanism that maximizes the moderator's expected utility, when the \emph{spectrum is not available with certainty}. We assume that the moderator employs \emph{collaborative spectrum sensing} in order to make a reliable inference about spectrum availability. Due to the presence of a collision cost whenever the moderator makes an erroneous inference, and a sensing cost at each CR, we investigate feasibility conditions that guarantee a non-negative utility at the moderator. We present tight theoretical-bounds on instantaneous network throughput and also show that our algorithm provides maximum throughput if the CRs have i.i.d. valuations. Since the moderator fuses CRs' sensing decisions to obtain a global inference regarding spectrum availability, we propose a novel strategy-proof fusion rule that encourages the CRs to simultaneously reveal truthful sensing decisions, along with truthful valuations to the moderator. Numerical examples are also presented to provide insights into the performance of the proposed auction under different scenarios.

\end{abstract}

\begin{IEEEkeywords}
Cognitive Radio Networks, Spectrum Availability Uncertainty, Auctions, Spectrum Allocation, Spectrum Sensing.
\end{IEEEkeywords}

\section{Introduction}

Over the past decade, spectrum scarcity has driven several researchers to pursue the problem of dynamic spectrum access (DSA) \cite{Zhao2007} for cognitive-radios (CRs). Federal Communications Commission (FCC) relaxed some of its license protocols, after identifying the inefficient usage of spectrum by the PUs (licensed users), and encouraged the use of licensed spectrum by unlicensed secondary users or CRs, without interfering with the PUs. In order to implement the solutions offered by DSA, intelligent CRs were proposed by Mitola in \cite{MitolaPhD2000} which learn and adapt to the dynamically changing environment (spectrum). In the problem of dynamic spectrum allocation, multiple CRs compete for the same primary user’s (PU) channel, whenever it is not being used. In such a case, there is a need for a mechanism that allocates the spectrum among the CRs in the network. In this paper, we focus our attention on how auction mechanisms can resolve the problem of spectrum allocation among multiple CRs.


FCC proposed two important variants of the problem of dynamic spectrum allocation (for example, in the case of Unlicensed National Information Infrastructure (UNII) devices \cite{FCC2013}), namely \emph{deterministic} and \emph{nondeterministic} spectrum allocations. In \emph{deterministic} spectrum allocation, the moderator (base-station that allocates the spectrum) has complete knowledge about spectrum availability, while allocating the spectrum to the participating CRs. For a comprehensive study on deterministic spectrum auctions and their applications, the reader is specifically referred to \cite{Maharajan2011, Bogucka2012, Zhang2013, Tragos2013}. 

In this paper, we focus on \emph{nondeterministic} spectrum allocation mechanisms, where the moderator does not have complete knowledge of the state of the PU's spectrum (available/busy for use by CRs). In such a scenario, the moderator relies mainly on the prior distributions and spectrum sensing capabilities of the CRs to gain knowledge about the PU's spectrum usage (as in the case of UNII-II devices). Based on sensing decisions and spectrum valuations provided by the CRs, the moderator allocates the spectrum when deemed available, and charges the CRs for providing service as a moderator. This necessitates us to consider the problem of designing a market-based spectrum management mechanism where the traded commodity in the market (spectrum) is available with uncertainty. In designing such a market-based framework, we encounter the following major challenges. First, CRs, being selfish and rational entities, may reveal falsified sensing decisions as well as incorrect valuations in order to make a higher profit. Second, there are several costs involved that contribute negatively to the players' utilities, such as sensing costs at the CRs, and collision costs if the moderator makes erroneous inferences regarding spectrum availability. In such a scenario, how do we incentivize participation of SUs and maintain feasibility? Third, at what price would the trading of the uncertainly available spectrum take place?

In answer to the aforementioned questions, we propose to use auctions~\cite{Book-Krishna}. Specifically, we design an optimal auction mechanism that maximizes the utility (revenue) of the moderator while maintaining individual-rationality (to incentivize CR participation), and, incentive-compatibility by ensuring truthful revelations of sensing decisions as well as valuations of the CRs. To the best of our knowledge, this is the first work to design an optimal spectrum auction in the presence of spectrum uncertainty, as well as the first work that ensures truthfulness in multiple dimensions. 

\subsection{Recent Work}
In the recent past, there have been few works that address spectrum auctions for CR networks in the presence of spectrum uncertainty. In 2012, to the best of our knowledge, we were the first to propose a \emph{nondeterministic} auction mechanism for spectrum allocation \cite{Nadendla2012} that maximizes the moderator's expected utility. However, in \cite{Nadendla2012}, we investigated a restricted solution space that does not guarantee \emph{individual rationality} at the participating CRs. Furthermore, we only addressed truthful revelation of valuations alone in our model in \cite{Nadendla2012}. 

In \cite{Li2013}, Li \emph{et al.} proposed an online \emph{nondeterministic} auction mechanism that maximizes a social-welfare function over a finite time-horizon, where the valuations are asynchronously transmitted to the moderator. In contrast, in our current work, we design an optimal auction that maximizes the expected utility (revenue) of the moderator, to fill-in the void that remains, given that the prior work in \cite{Li2013} has proposed an auction that maximizes ``social welfare" under spectrum uncertainty. Furthermore, we consider a synchronous (offline) mechanism where the CRs reveal their information at the same time. Note that both these auction mechanisms (social-welfare vs. revenue-maximization, online vs. offline) are two diverse perspectives  in any mechanism design, and therefore, they complement each other. The choice between the two auction mechanisms is scenario-specific, and we refer the reader to \cite{Book-Krishna, Blum2004, Blum2005} for a detailed account on tradeoffs between the two diverse perspectives.

Also in 2013, Tehrani \emph{et al.} proposed a first-price auction mechanism \cite{Tehrani2013} which does not guarantee truthful revelations of either CRs' sensing decisions or the valuations. Therefore, CRs may have an incentive to reveal false valuations to the moderator in order to improve their individual utilities.

\subsection{Contributions}
In this paper, we design a novel \emph{nondeterministic}, truthful and a \emph{sealed-bid} auction mechanism called \emph{Optimal Auction under Uncertain Spectrum Availability}, in short, OAUSA, that maximizes the moderator's utility in the presence of erroneous spectrum sensing decisions at the moderator. We assume a cooperative spectrum sensing framework where the spatially distributed CRs sense the spectrum and send processed observations to the moderator, where these processed observations are fused into a global inference. For more details on the benefits of cooperative spectrum sensing such as, how it alleviates the hidden node problem, the reader may refer to \cite{Yucek2009, Akyildiz2011, Axell2012}. Following are the main contributions of the paper:
\begin{itemize}
    \item We design an optimal auction that maximizes the moderator's utility (revenue) in the presence of uncertainly available PU spectrum. Note that the prior work in \cite{Li2013} has addressed an auction framework that maximizes ``social welfare".
    \item Due to the presence of spectrum uncertainty and the costs of participation and collision, the auction mechanism need not always be feasible. So, we derive conditions under which our proposed auction is feasible. We also investigate the conditions under which there is a need for spectrum sensing.
    \item We propose an auction that ensures truthfulness in the revelation of sensing decisions as well as valuations. To the best of our knowledge, we investigate spectrum auction mechanisms that address multi-dimensional truthfulness for the first time. We also provide analytical bounds on the change in the utility due to our strategy-proof fusion rule.
    \item We present theoretical bounds on the network throughput due to our proposed auction. Furthermore, we also show that our auction provides maximum throughput when the CRs have statistically independent and identically distributed (i.i.d.) valuations.
    \item Simulation results are provided to illustrate the dynamics and effectiveness of the proposed auction mechanism under various scenarios.
\end{itemize}

Note that the proposed auction design is not only a contribution to dynamic spectrum allocation in CR networks, but, advances auction theory in general. In fact, the proposed auction is a generalized Myerson's optimal auction \cite{Myerson1981} applicable to stochastic scenarios where the resource is not available with certainty. The proposed auction reduces to Myerson's optimal auction if the moderator has complete knowledge about the state of the PU spectrum with zero collision and participation costs.


\section{System Model \label{sec: System Model}}

\begin{table}[!t]
	\centering
	\begin{tabular}{ccl}
		\hline\hline
		\\[-1.5ex]
		\textbf{Notation} & \textbf{Range} & \textbf{Description} \\ [0.5ex] 
		\hline
		\\[-1.5ex]
		$t_i$ & $[a_i,z_i]$ & True valuation at the $i^{th}$ CR.
		\\ \\[-1.5ex]
		$p_i(\cdot)$ & $\mathbb{R}^+$ & Probability density function of $t_i$
		\\ \\[-1.5ex]
		$F_i(\cdot)$ & $[0,1]$ & Cumulative density function of $t_i$
		\\ \\[-1.5ex]
		$v_i$ & $[a_i,z_i]$ & Revealed valuation at the $i^{th}$ CR
		\\ \\[-1.5ex]
		$w_i$ & $[a_i,z_i]$ & Virtual valuation at the $i^{th}$ CR
		\\ \\[-1.5ex]
		$\psi_i$ & $[0,1]$ & Probability(Fraction) of spectrum allocated to the $i^{th}$ CR.
		\\ \\[-1.5ex]
		$b_i$ & $\mathbb{R}$ & Payments made by the $i^{th}$ CR to the moderator
		\\ \\[-1.5ex]
		$c_p$ & $\mathbb{R}^+$ & Participation cost incurred at the CRs
		\\ \\[-1.5ex]
		$c_{coll}$ & $\mathbb{R}^+$ & Collision cost imposed by the PU to the moderator
		\\ \\[-1.5ex]
		$P_{f_i}$ & $[0,1]$ & Probability of false alarm at the $i^{th}$ CR
		\\ \\[-1.5ex]
		$P_{d_i}$ & $[0,1]$ & Probability of detection at the $i^{th}$ CR
		\\ \\[-1.5ex]
		$Q_f$ & $[0,1]$ & Probability of false alarm at the moderator
		\\ \\[-1.5ex]
		$Q_d$ & $[0,1]$ & Probability of detection at the moderator
		\\ \\[-1.5ex]
		$q_0$ & $[0,1]$ & Joint probability of deciding $H_0$ and PU's state being $H_0$
		\\ \\[-1.5ex]
		$q_1$ & $[0,1]$ & Joint probability of deciding $H_0$ and PU's state being $H_1$
		\\ \\[-1.5ex]
		$U_i$ & $\mathbb{R}$ & Expected utility at the $i^{th}$ CR in  \emph{OAUSA}
		\\ \\[-1.5ex]
		$U_0$ & $\mathbb{R}$ & Expected utility at the moderator in \emph{OAUSA}
		\\ \\[-1.5ex]
		$\hat{U}_i$ & $\mathbb{R}$ & Expected utility at the $i^{th}$ CR in traditional auctions
		\\ \\[-1.5ex]
		$\hat{U}_0$ & $\mathbb{R}$ & Expected utility at the moderator in traditional auctions
		\\ \\[-1.5ex]
		\hline \hline
	\end{tabular}
	\caption{Glossary of Notation}
	\label{Table: Notation}
\end{table}

Consider a network of $N$ CRs which compete for a given PU's spectrum, as shown in Figure \ref{Fig: model}. We assume the presence of a moderator which is responsible for the task of spectrum allocation in our proposed mechanism. In this paper, we assume that the moderator has no knowledge about the true state of the PU activity in the spectrum. But, we assume that the moderator has knowledge regarding the prior distribution of the PU channel state, and the observation model at the CRs. This information is available at the moderator under the premise that the moderator is aware of the existence and the location of PU. Therefore, the moderator has two roles: \emph{fusion} of sensing decisions and \emph{allocation} of available PU spectrum among the CRs in the network.

\begin{figure}[!t]
	\centering
    \includegraphics[width=3in]{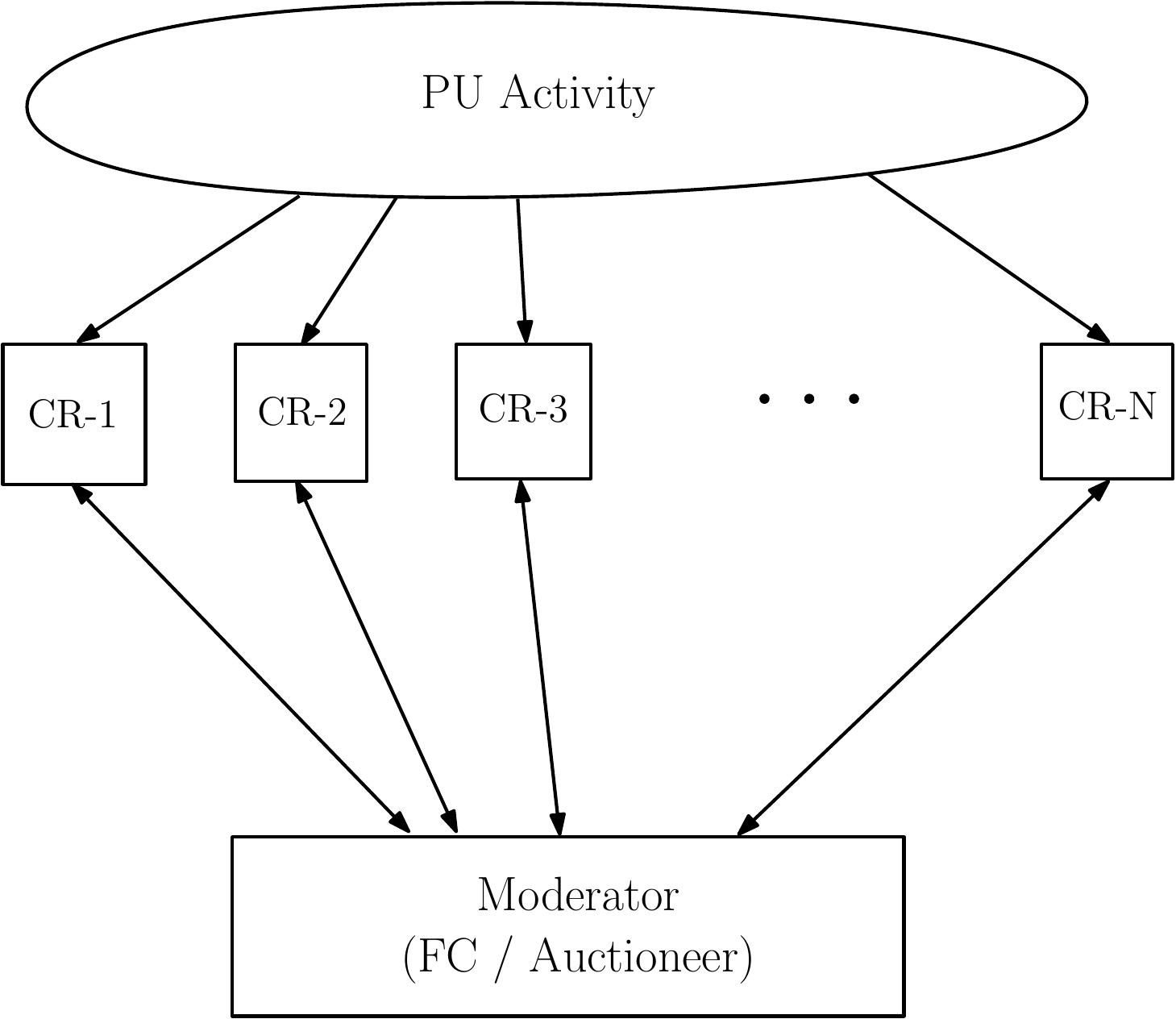}
    \caption{Cognitive-Radio Network Model for collaborative spectrum sensing and allocation}
    \label{Fig: model}
\end{figure}

In order to make reliable decisions about the spectrum allocations with minimal interference to the PU, we assume that the CRs transmit hard-quantized observations (binary decisions) regarding spectrum availability to the moderator, so that the moderator fuses all these sensor messages into a global inference about the PU activity. This role (subsystem) of the moderator is that of a \emph{fusion center} (FC). If the moderator makes an inference that the spectrum is available, then it allocates the spectrum among the set of $N$ CRs in the network via an auction mechanism in which the moderator plays the role of an \emph{auctioneer}.

\emph{In short, the auctioneer subsystem acquires the knowledge of spectrum availability from the FC and allocates the spectrum to the CRs only when it deems it profitable.}

In the remainder of the section, we present the system model in two subsections dedicated to the tasks of sensing-data fusion and the auction of available PU spectrum at the moderator respectively. For the sake of clarity in articulating the role of the moderator, henceforth, we use the terms 'FC' and 'auctioneer' in the place of moderator according to the context and need.

\subsection{Sensing Model}
Spectrum sensing is a binary hypothesis-testing problem where $H_0$ corresponds to the \emph{absence} of the PU activity while $H_1$ corresponds to the \emph{presence} of the PU activity in the licensed band. We assume that the moderator only knows the prior probabilities of the two hypotheses, which are denoted as $\pi_0 = P(H_0)$ and $\pi_1 = P(H_1)$. Note that the secondary network is interested in identifying $H_0$ for spectrum opportunities, and $H_1$ to avoid collisions with the PU.

Let $\mathbb{N} = \{ 1, 2, \cdots, N \}$ denote the set of CRs in the network. We assume that the CRs perform collaborative spectrum sensing by sending their sensing decisions to the FC. The $i^{th}$ CR senses the spectrum and sends a binary decision $u_i \in \{ 0, 1 \}$ to the FC. The FC makes a global decision about the spectrum availability based on all the sensor decisions. In this paper, we assume that the probabilities of false alarm and detection (denoted $P_{f_i}$ and $P_{d_i}$ respectively) of the $i^{th}$ CR are known at the FC, for all $i \in \mathbb{N}$.

For the sake of tractability, we assume that all the CRs in the network reveal their sensing decisions truthfully to the moderator. However, due to the competitive nature of the CRs in acquiring the PU spectrum for their personal needs, they may rationally deviate from revealing their sensing decisions truthfully in order to improve their individual's utilities. Therefore, in Section \ref{sec: Truthful Sensing}, we investigate conditions under which no CR has any incentive to unilaterally deviate from revealing truthful sensing decisions to the moderator. This condition eliminates the possibility of two CRs colliding with each other during their respective secondary transmissions in the PU channel.

Since the FC fuses all the participating CRs' local decisions to obtain a reliable inference about the PU activity, the detection performance at the FC can be characterized by the global probabilities of false alarm and detection, denoted as $Q_f$ and $Q_d$ respectively. Both these quantities can be computed in terms of $P_{f_i}$ and $P_{d_i}$, for all $i \in \mathbb{N}$, for a given fusion rule. Furthermore, we assume that the moderator has knowledge about the observation model at the CRs, which it uses to design the optimal fusion rule. As an example, we present the computation of $Q_f$ and $Q_d$ for a $k$-out-of-$n$ fusion rule in our simulation results in Section \ref{sec: Simulation}, where $k$ is optimal in the sense of error probability.

\subsection{Auction Model}
The moderator may not be willing to provide its service for free (e.g., an IEEE 802.22 BS \cite{IEEE802.22}) and may be a profit seeking entity. Thus, we assume that the CRs have to pay the FC for the service. In any auction, before the process of auctioning starts, the auctioneer valuates the item to be auctioned. Since we assume that the moderator does not have any personal use for spectrum, in this paper, we assume that the moderator's valuation for the available spectrum is zero. Note that the moderator finds it profitable to allocate the available spectrum only when the revenue from the auction is non-negative. Likewise, every CR valuates the available spectrum depending on its need for spectrum. We assume that the $i^{th}$ CR has a truthful valuation (type) $t_i$ for the available spectrum. In this paper, we assume that $t_i$ is a random variable with probability density function $p_i(t_i)$ and cumulative density function $F_i(t_i)$. We assume that each of the valuations $t_i \in [a_i, \ z_i]$ for all $i = 1, \cdots, N$, where $a_i$ and $z_i$ are the minimum and the maximum value that the $i^{th}$ CR can valuate. Of course, when the spectrum is not available for secondary usage (hypothesis $H_1$ is true), we assume that the spectrum is valuated at all the nodes ($N$ CRs, and the moderator) to be zero.

Having valuated the spectrum locally at the CR, the valuations are revealed to the auctioneer (through noiseless, orthogonal control-channels) for spectrum allocation. We also assume that these communication channels are secure, and hence, we restrict ourselves to a sealed-bid framework in our auction design. But, being rational entities, the CRs might reveal false valuations for selfish reasons, in order to acquire greater utility. So, we denote the valuation revealed by the $i^{th}$ CR to the moderator as $v_i \in [a_i, z_i]$ (which may or may not equal to $t_i$). Of course, the choice of $v_i$ depends on the true valuation $t_i$ of the $i^{th}$ CR. Once the spectrum auctioning is complete, the $i^{th}$ CR gets a spectrum allocation $\psi_i \in [0, 1]$ and pays a total amount of $b_i \in \mathbb{R}$ to the moderator at the end, for having participated in the auction. Note that, if the spectrum is indivisible, $\psi_i$ can be interpreted as the probability of allocating the spectrum to the $i^{th}$ CR, while if the spectrum is divisible, $\psi_i$ can be interpreted as the fraction of spectrum allocated to the $i^{th}$ CR.

Note that, in the case where the spectrum-of-interest is treated as an indivisible entity, then valuations can be associated with any model, as the allocation is stochastic in nature. On the other hand, if $\psi_i$ is interpreted as a fractional allocation, then valuations are restricted to linear models. Although valuation models are restricted to a linear structure in the case of divisible spectra, for example, the most relevant and practical example for calculating valuations is based on \emph{throughput}, denoted $\Pi_i$, as shown below.
\begin{equation}
	t_i = \alpha_i \Pi_i = c_i \log \left( 1 + SNR_i \right),
	\label{Eqn: Valuation - Throughput}
\end{equation}
where $\alpha_i$ is a known constant for all $i = 1, \cdots, N$, and, $SNR_i$ is the instantaneous SNR at the corresponding $i^{th}$ CR's receiver. Since the receive SNR is unknown beforehand especially in the context of fading models, $t_i$ is a random variable due to the randomness in $SNR_i$ whose density function can be computed for a given channel model.

Given that the moderator makes all of its decisions based on the CRs' sensing decisions, there is a finite probability with which it can make errors. The error that is of particular concern to the moderator is the case where it allocates the spectrum when the PU is using it. In this case, CRs collide with the PU due to erroneous sensing inference made by the moderator. Therefore, in this paper, we assume that the moderator is responsible for such a mishap regarding which PU penalizes it with a collision cost $c_{coll}$. This is enabled through an underlying protocol where both the PU and the moderator can communicate between themselves and decide the state of the system.

%

Table \ref{Table: Auction} describes the payoffs corresponding to different strategies employed by different players in the auction model, where $c_{coll}$ is the cost of a collision with the PU, and $c_p$ is the cost of participation (includes cost of sensing and transmission) for a given CR. As shown in Table \ref{Table: Auction}, we assume that the CRs are risk-neutral and restrict the structure of the payoffs to be additively separable in terms of payments, spectrum and costs incurred.

\begin{table*}[!t]
	\centering
	\begin{tabular}{c c c c c c}
		\hline\hline
		\textbf{Player} & \textbf{Strategies} & \textbf{Utilities} & \textbf{Moderator's Inference} & \textbf{True Hypothesis} & \textbf{Comments} \\ [0.5ex] 
		\hline \hline
		\\[-1ex]
		&& $\displaystyle \sum_{i = 1}^N b_i$ & $H_0$ & $H_0$ & no collision \\
		&\raisebox{3.5ex}{Allocate (A)} &  $\displaystyle \sum_{i = 1}^N b_i - \sum_{i = 1}^N \psi_i c_{coll}$ & $H_0$ & $H_1$ & collision\\[4ex]
		\raisebox{2ex}{FC} \\[-4ex]
		&& $\displaystyle \sum_{i = 1}^N b_i$ & $H_1$ & $H_0$ & no collision \\
		&\raisebox{3.5ex}{Not Allocate (NA)} & $\displaystyle \sum_{i = 1}^N b_i$ & $H_1$ & $H_1$ & no collision \\[1ex]
		\\
		\hline
		\\		
		&& $\psi_i t_i - b_i - c_p$ & $H_0$ & $H_0$ & no collision \\
		\raisebox{2.5ex}{$i^{th}$ CR} &\raisebox{2.5ex}{Declare $v_i$}\ignore{\raisebox{2.5ex}{Participate (P)} }&  $-b_i -c_p$ & \multicolumn{2}{c}{otherwise}\\[-1ex]
%
%
		\\
		\hline \hline
	\end{tabular}
	\caption{Auction Game Model for Dynamic Spectrum Access in CR Networks}
	\label{Table: Auction}
\end{table*}

Note that, in our model, we impose the penalty of collisions completely on the FC making it responsible for its erroneous decisions, while the CRs bear a participation cost $c_p$ if they participate in the spectrum sensing task, whether or not spectrum allocation is made to them. In addition, the term $\displaystyle \sum_{i = 1}^N \psi_i c_{coll}$ represents the total cost that the FC bears due to the allocation \boldmath$\psi$\unboldmath. Note that, if $\displaystyle \sum_{i = 1}^N \psi_i = 1$, then the moderator incurs the complete cost of collision, $c_{coll}$. 

Figure \ref{Fig: interactions} shows the interactions between the CRs and the moderator in our proposed mechanism. Once the CRs make their observations, they transmit their local decisions $\mathbf{u}$, along with their valuations (revelations) $\mathbf{v}$ to the moderator. With the sensing knowledge from $\mathbf{u}$, the FC subsystem makes a global inference about the PU state and informs the auctioneer subsystem along with the corresponding $Q_f$ and $Q_d$ respectively. After the auctioning process, the moderator provides the allocation vector \boldmath$\psi \ $\unboldmath to the CRs, and the CRs pay $\mathbf{b}$ to the moderator for its service.

\begin{figure}[!t]
	\centering
    \includegraphics[width=3.3in]{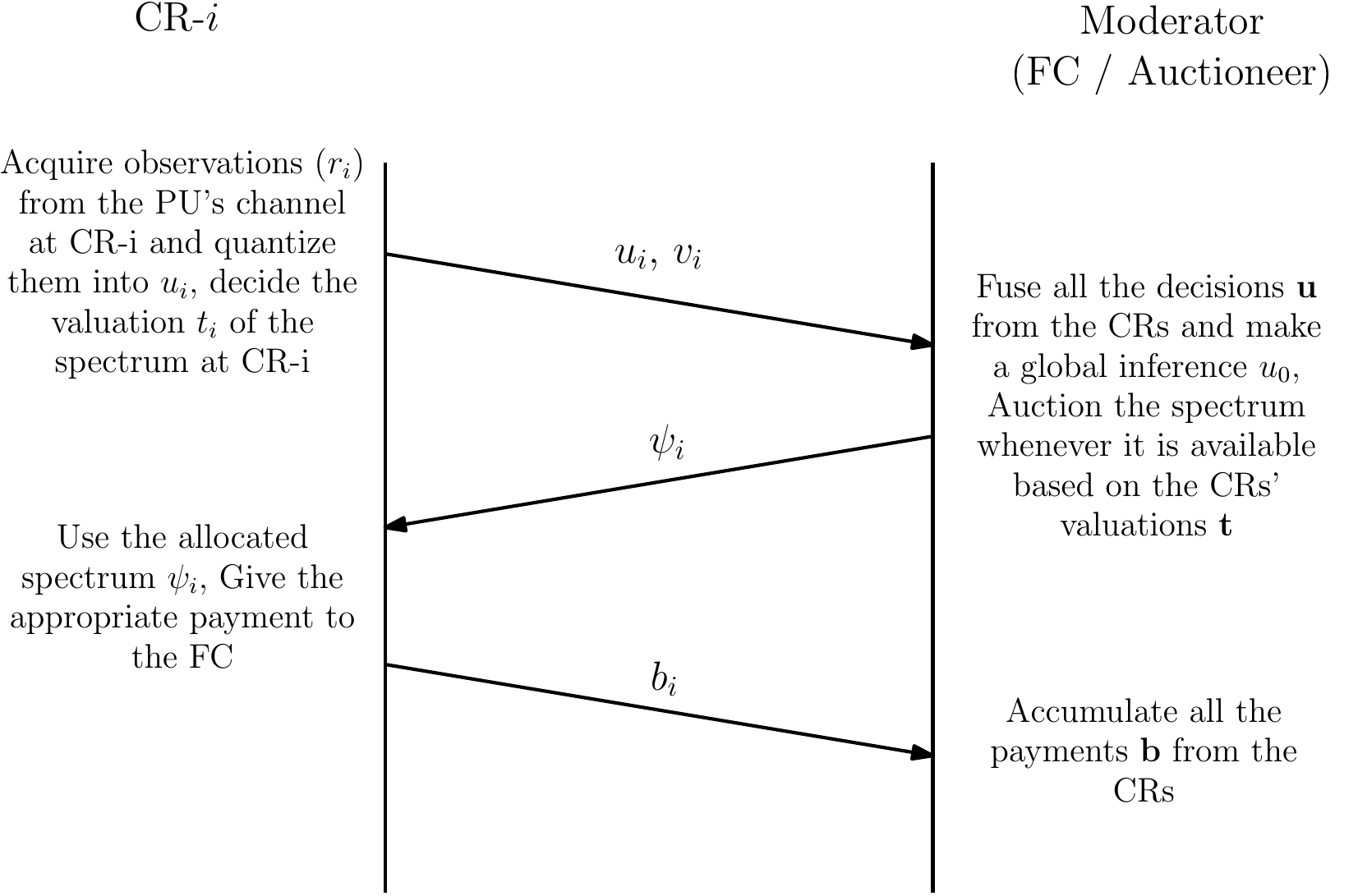}
    \caption{Interactions between the CRs and the FC (Auctioneer)}
    \label{Fig: interactions}
\end{figure}

It is worth mentioning that our auction-based mechanism reduces to several problems of interest as special cases in the context of dynamic spectrum access. Note that if $c_{coll} = 0$, $\pi_0 = 1$, $Q_f = 0$ and $Q_d = 1$, then the auction mechanism reduces to the problem of spectrum commons, where the CRs share unlicensed spectrum that is available with certainty. On the other hand, one can interpret \boldmath$\psi \ $\unboldmath as a fractional or a stochastic allocation vector depending on whether the spectrum is divisible or indivisible respectively. In the case of divisible spectrum, the CRs would employ an OFDM-based system so that an appropriate waveform is designed within the allocated bandwidth. Otherwise, a CR employs a simple spectrum-overlay system when the spectrum is indivisible.


%

\subsection{Some Definitions: Expected Utilities, Rationality, Truthfulness and Feasibility \label{sec: Definitions}}

Being a rational entity, the moderator would want to maximize its net utility. Since the moderator is also responsible for allocating the spectrum, it also incurs the losses whenever erroneous decisions are made regarding the availability of the spectrum, resulting in collisions with the PU. Obviously, if the PU's presence is detected, then there is no motivation for the moderator to allocate the spectrum to the CRs. On the other hand, when PU's absence is detected, the moderator allocates the spectrum to the CRs, if profitable. While collisions with the PU transmissions occur when there is a misdetection at the FC with a probability $q_1 = \pi_1(1 - Q_d)$, allocation without collision takes place with probability $q_0 = \pi_0(1 - Q_f)$. Of course, allocation does not take place with a probability $1 - q_0 - q_1 = \pi_1 Q_d + \pi_0 Q_f$.

From Table \ref{Table: Auction}, we define the expected utility of the moderator as follows.

\begin{defn}
	The utility of the moderator, denoted as $U_0$, is given by Equation (\ref{Eqn: Moderator Expected Utility}),
	\begin{equation}
		\begin{array}{lcl}
		U_0(\mbox{\boldmath$\psi$\unboldmath}, \mathbf{b}) & = & \displaystyle \mathbb{E}_{\mathbf{t}} \left[ \ \pi_0(1 - Q_f) \left\{ \sum_{i=1}^N b_i \right\} \right.
		\\[3ex]
		&& \qquad \displaystyle + \pi_1(1 - Q_d) \left\{ \sum_{i=1}^N b_i - \sum_{i = 1}^N \psi_i c_{coll} \right\}
		\\[3ex]
		&& \qquad \displaystyle \left. + \pi_0 Q_f \left\{ \sum_{i=1}^N b_i \right\} + \pi_1 Q_d \left\{ \sum_{i=1}^N b_i \right\} \ \right]
		\\[4ex]
		& = & \displaystyle  \mathbb{E}_{\mathbf{t}} \left[ \sum_{i=1}^N b_i  - q_1 c_{coll} \sum_{i = 1}^N \psi_i \right].
	\end{array}
	\label{Eqn: Moderator Expected Utility}
	\end{equation}
\end{defn}

Note that the moderator is not aware of the truthful valuations $\mathbf{t}$, since the CRs' revelations indicate that their valuations are $\mathbf{v}$. Therefore, we assume that the truthful valuations $\mathbf{t}$ form a random vector, and therefore, the expected utility of the moderator is computed by averaging over different values of $\mathbf{t}$.

Since the valuations are revealed confidentially to the auctioneer, this mode-of-operation is termed \emph{direct-revelation} \cite{Myerson1981}. We address the notion of direct revelation in greater detail, along with the notion of a given node's rationality in the following subsection.

Note that the moderator's utility depends on $\mathbf{b}$ and \boldmath$\psi$\unboldmath, which in turn, depend on the revelations $\mathbf{v}$, $c_p$, $c_{coll}$, $q_0$ and $q_1$. If the moderator receives a negative expected utility, there is no motivation for the auction mechanism to be pursued by the moderator.


Since a CR does not have any direct-control in the decision-making process, the node will be satisfied as long as its expected utility is non-negative. We define the expected utility of the $i^{th}$ CR, from Table \ref{Table: Auction}, as follows.
\begin{defn}
	The utility of the $i^{th}$ CR, denoted as $U_i$, is given by
	\begin{equation}
		\begin{array}{lcl}
			U_i ( \mbox{\boldmath$\psi$\unboldmath}, \mathbf{b}, t_i) & = & \displaystyle \mathbb{E}_{\mathbf{t_{-i}}}  \left[ q_0 \left\{ \psi_i t_i - b_i - c_p \right\} \right.
			\\[2ex]
			&& \qquad \qquad \displaystyle \left. + (1 - q_0)\left\{ - b_i - c_p \right\} \right]
			\\[3ex]
			& = & \displaystyle \mathbb{E}_{\mathbf{t_{-i}}} \left[ q_0 \psi_i t_i - b_i - c_p \right],
		\end{array}
		\label{Eqn: CR Expected Utility}
	\end{equation}
	where $\mathbf{t}_{-i} = \{ t_1, \cdots, t_{i-1}, t_{i+1}, \cdots, t_N \}$, $\psi_i$ is the fraction of bandwidth allocated to the $i^{th}$ CR, and $b_i$ is the price paid by the $i^{th}$ CR to the FC for participating in the auction.
\end{defn}

Also, we define the expected amount of spectrum that the $i^{th}$ CR gets, in the proposed allocation, is defined as follows.
\begin{defn}
	The expected fraction of bandwidth that node $i$ gets for a given valuation $v_i$, denoted as $\Psi_i(v_i)$ is defined as follows.
	\begin{equation}
		\Psi_i(v_i) = \displaystyle \int q_0 \psi_i(v_i, \mathbf{t_{-i}}) p_{-i}(\mathbf{t_{-i}}) d\mathbf{t_{-i}},
	\end{equation}
	where $p_{-i}(\mathbf{t_{-i}})$ is the joint distribution of $\mathbf{t}_{-i}  = \{ t_1, \cdots, t_{i-1}, t_{i+1}, \cdots, t_N \}$.
	\label{Defn: Expected allocation - vi}
\end{defn}

Since the valuation of the $i^{th}$ CR is known locally, it is not treated as a random variable, and the expectation in the $i^{th}$ CR's utility is carried over $\mathbf{t}_{-i}$. In other words, we assume that the true spectrum valuation (personal preferences) of any given CR is not known to the other CRs.


In order to incentivize participation of CRs in the auction mechanism, there is a need to guarantee a non-negative utility to each of the nodes. Otherwise, the CR would not participate because of the loss that it may incur. This condition is termed \emph{individual rationality}, and is defined as follows.

\begin{defn}[Individual Rationality]
	Individual rationality criterion, which motivates the participation of CRs in the auction mechanism, is defined as
	\begin{equation}
		U_i ( \mbox{\boldmath$\psi$\unboldmath}, \mathbf{b}, t_i) \geq 0, \quad \forall \ i \in \mathbb{N},
		\label{Eqn: Individual Rationality}
	\end{equation}
	for any $t_i \in [a_i,z_i]$.	
\end{defn}

Any selfish and rational entity tries to maximize its individual payoff. So can a CR lie in revealing its valuations and achieve personal gains. However, such a behavior on the part of the CRs can lead to inefficient auction outcomes.  Therefore, in order to prevent the CRs from lying about their valuations, we design the optimal auction mechanism with truthful revelations from the CRs. But, in order to force all the CRs to reveal truthful valuations to the moderator, we need to ensure that there is no incentive for the CR to falsify its revelations. This is called the \emph{incentive compatibility} condition and is defined as follows.


\begin{defn}[Incentive Compatibility]
	A truthful revelation is ensured by the incentive compatibility condition, where no CR has the incentive to reveal false valuations to the moderator. This is given by
	\begin{equation}
		\begin{array}{lcl}
			\displaystyle U_i (\mbox{\boldmath$\psi$\unboldmath}, \mathbf{b}, t_i) & \geq & \displaystyle \int \left[ q_0 \psi_i(v_i, \mathbf{t_{-i}}) t_i - b_i(v_i, \mathbf{t_{-i}})  - c_p \right] \cdot
			\\
			&& \qquad \qquad \qquad \qquad \qquad \quad p_{-i}(\mathbf{t_{-i}}) d\mathbf{t_{-i}},
		\end{array}
		\label{Eqn: Incentive Compatibility}
	\end{equation}
	where, $t_i$ is the true valuation of node i, and $v_i \neq t_i$ is any dishonest valuation declared.
\end{defn}

Note that, if the moderator finds the auction mechanism unprofitable ($U_0$ is smaller than $0$), for all theoretical purposes, we assume that the auction is infeasible. Of course, we consider the possibility of having a feasible auction that allows $\mbox{\boldmath$\psi$\unboldmath} = \mathbf{0}$. 

Thus, we define the feasibility of an auction as follows.

\begin{defn}[Feasibility]
	We define that an auction is \emph{feasible} if the following three conditions are satisfied.
	\begin{itemize}
		\item Individual Rationality, as given by Equation (\ref{Eqn: Individual Rationality})
		\\
		\item Incentive Compatibility, as given by Equation (\ref{Eqn: Incentive Compatibility})
		\\
		\item $\displaystyle \sum_{i \in \mathbb{N}} \psi_i \leq 1$, where $0 \leq \psi_i \leq 1$ for all $i \in \mathbb{N}$.
		\\
		\item $\displaystyle U_0(\mbox{\boldmath$\psi$\unboldmath}, \mathbf{b}) \geq 0$.
	\end{itemize}
	\label{Defn: Feasibility}
\end{defn}

The following lemma, often called the \emph{revelation principle}, discusses the existence of an auction design that ensures the incentive compatibility condition. This result was first presented by Myerson in \cite{Myerson1979}.

\begin{lemma}[Revelation Principle]
	Given any feasible auction mechanism, there exists an equivalent feasible direct revelation mechanism which provides the same utilities to all the players (in this context, CRs and the moderator).
\end{lemma}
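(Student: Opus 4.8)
The plan is to begin from an arbitrary feasible auction $\Gamma$ --- allowing it to be an \emph{indirect} mechanism in which the $i^{th}$ CR chooses a message from some abstract strategy space $\mathcal{M}_i$ rather than directly reporting a valuation --- and to fold each CR's equilibrium behavior into the rules of a new direct-revelation mechanism. Since $\Gamma$ is played at some (Bayesian) equilibrium, it admits a strategy profile $\sigma = (\sigma_1, \cdots, \sigma_N)$ in which $\sigma_i : [a_i, z_i] \to \mathcal{M}_i$ maps the true valuation $t_i$ to the optimal message $\sigma_i(t_i)$ that the $i^{th}$ CR would transmit, given the prior $p_{-i}(\mathbf{t_{-i}})$ over rivals' types and the sensing-induced probability $q_0$ with which an allocation is actually realized. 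Let $\psi^\Gamma_i$ and $b^\Gamma_i$ denote the allocation and payment rules of $\Gamma$, viewed as functions of the transmitted messages.

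First I would construct the candidate direct mechanism $\Gamma^D$, in which each CR is asked to report a valuation $v_i \in [a_i, z_i]$ and the moderator internally simulates the equilibrium strategies before applying the original rules:
\[ \psi_i(\mathbf{v}) = \psi^\Gamma_i\big( \sigma_1(v_1), \cdots, \sigma_N(v_N) \big), \quad b_i(\mathbf{v}) = b^\Gamma_i\big( \sigma_1(v_1), \cdots, \sigma_N(v_N) \big). \]
The crux of the argument is incentive compatibility (Equation (\ref{Eqn: Incentive Compatibility})): if some CR of type $t_i$ could strictly raise its expected utility in $\Gamma^D$ by reporting $v_i \neq t_i$, then in $\Gamma$ that same CR could have deviated from $\sigma_i(t_i)$ to the message $\sigma_i(v_i)$ and secured exactly the same higher payoff, contradicting that $\sigma$ is an equilibrium of $\Gamma$. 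Hence truthful reporting $v_i = t_i$ is optimal in $\Gamma^D$. Because $\Gamma^D$ reproduces, along the truthful path $\mathbf{v} = \mathbf{t}$, precisely the message profile $\sigma(\mathbf{t})$ that occurs in equilibrium under $\Gamma$, the realized allocations and payments coincide pointwise; consequently the expected utilities $U_i$ (Equation (\ref{Eqn: CR Expected Utility})) and $U_0$ (Equation (\ref{Eqn: Moderator Expected Utility})) are identical across the two mechanisms. Individual rationality (Equation (\ref{Eqn: Individual Rationality})), the allocation constraint $\sum_{i \in \mathbb{N}} \psi_i \leq 1$, and the non-negativity $U_0 \geq 0$ then transfer immediately, so $\Gamma^D$ is feasible and equivalent to $\Gamma$.

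I expect the main obstacle to be conceptual rather than computational: one must argue that every profitable deviation available in the direct mechanism (reporting a false type) can be mirrored by an admissible deviation in the original indirect mechanism, and that the information on which each CR conditions its choice --- the prior $p_{-i}(\mathbf{t_{-i}})$ and the fusion-center statistics entering through $q_0$ and $q_1$ --- is held fixed across the two constructions. Since $\Gamma^D$ is obtained purely by composing the outcome rules of $\Gamma$ with the equilibrium strategies, and alters neither the type distributions nor the sensing model, these conditioning distributions are preserved and the equivalence of the relevant expectations follows. I would also note that the collision and participation costs enter all payoffs additively and identically in both mechanisms, so their presence does not affect the construction.
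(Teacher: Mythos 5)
Your proof is correct: it is the canonical simulation argument --- compose the outcome rules of the original mechanism with the CRs' equilibrium strategies $\sigma_i$, deter any profitable misreport $v_i \neq t_i$ in the direct mechanism by mapping it back to the deviation $\sigma_i(v_i)$ available in the indirect one, and observe that outcomes coincide pointwise along the truthful path, so individual rationality, $\sum_{i \in \mathbb{N}} \psi_i \leq 1$, and $U_0 \geq 0$ all transfer. The paper itself offers no proof of this lemma, merely citing Myerson \cite{Myerson1979}, and your argument is exactly the standard one from that reference; the only point worth highlighting is that you correctly note the sensing-induced quantities $q_0$, $q_1$ and the costs $c_p$, $c_{coll}$ are held fixed across the two constructions, which is the one place the paper's model could have differed from the classical setting and does not.
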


This lemma motivates us to proceed further and investigate an optimal direct-revelation mechanism that is feasible.

\section{Optimal Design of Auction Mechanisms Under Spectrum Availability Uncertainty \label{sec: Optimality}}

In this section, we formulate the problem of designing an optimal auction-based mechanism, and design an auction-mechanism that maximizes the expected utility of the moderator $U_0$, while simultaneously holding the incentive-compatibility condition. We formally state our problem of finding the optimal auction mechanism for DSA in CR networks, in Problem \ref{Prob. Statement: Framework-1}. 

\begin{prob-statement}
\begin{align*}
	\begin{array}{l}
		\displaystyle \maximize_{\mbox{\boldmath$\psi$\unboldmath}, \mathbf{b}} U_0(\mbox{\boldmath$\psi$\unboldmath}, \mathbf{b})
		\\[3ex]
		\st
		\\[2ex]
		\text{1. }~ U_i (\mbox{\boldmath$\psi$\unboldmath}, \mathbf{b}, t_i) \geq 0, \quad \forall i \in \mathbb{N}. \mbox{ (As given in Equation (\ref{Eqn: Individual Rationality}))}
		\\[2ex]
		\text{2. }~ U_i (\mbox{\boldmath$\psi$\unboldmath}, \mathbf{b}, t_i) \geq \displaystyle \int \left[ q_0 \psi_i(v_i, \mathbf{t_{-i}}) t_i - b_i(v_i, \mathbf{t_{-i}}) - c_p \right] \cdot
        \\
        \qquad \qquad \qquad \qquad \qquad \qquad \qquad \qquad p_{-i}(\mathbf{t_{-i}}) d\mathbf{t_{-i}},
        \\
        \qquad \quad \forall \  v_i \in [a_i,z_i], \  \forall \  i \in \mathbb{N}. \mbox{ (As given in Equation (\ref{Eqn: Incentive Compatibility}))}
        \\[2ex]
        \text{3. }~ \psi_i \geq 0, \quad \forall i \in \mathbb{N} \ \mbox{and } \displaystyle \sum_{i \in \mathbb{N}} \psi_i \leq 1.
	\end{array}
\end{align*}
\label{Prob. Statement: Framework-1}
\end{prob-statement}

Note that the conditions under which we maximize the utility of the moderator are the same as the ones presented in Definition \ref{Defn: Feasibility} on the feasibility of a given auction mechanism, except for one condition where $U_0 \geq 0$. Therefore, we first present an auction mechanism that satisfies \emph{individual rationality} and \emph{incentive compatibility} conditions. Later, we investigate the feasibility of the proposed auction mechanism in Theorem \ref{Thrm: Feasibility}.

We start our analysis of Problem \ref{Prob. Statement: Framework-1} by investigating the structure of the CRs' utility functions. For the sake of tractability, we restrict our analysis to a certain class of utility functions which are monotonically increasing in terms of their true valuations.

\begin{lemma}
	The expected utility function of the $i^{th}$ player, if monotonically increasing in its true valuation $v_i$, can be expressed as
	\begin{equation}
		U_i (\mbox{\boldmath$\psi$\unboldmath}, \mathbf{b}, t_i) = U_i (\mbox{\boldmath$\psi$\unboldmath}, \mathbf{b}, a_i) + \displaystyle \int_{a_i}^{t_i} \Psi_i(v_i) dv_i.
		\label{Eqn: U_i(v_i) = U_i(a_i) + ...}
	\end{equation}
	\label{Lemma: Properties of U_i}
\end{lemma}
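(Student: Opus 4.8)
The plan is to exploit the incentive-compatibility constraint (Constraint 2 of Problem \ref{Prob. Statement: Framework-1}, i.e. Equation (\ref{Eqn: Incentive Compatibility})) applied in \emph{both} directions between any pair of types. First I would introduce the shorthand $B_i(v_i) = \int b_i(v_i, \mathbf{t}_{-i}) p_{-i}(\mathbf{t}_{-i}) \, d\mathbf{t}_{-i}$ for the interim expected payment, so that, using Definition \ref{Defn: Expected allocation - vi}, the truthful utility reads $U_i(t_i) = t_i \Psi_i(t_i) - B_i(t_i) - c_p$ while the utility obtained by misreporting $v_i$ at true value $t_i$ reads $t_i \Psi_i(v_i) - B_i(v_i) - c_p$. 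Note that the participation cost $c_p$ enters additively and therefore cancels in every comparison.

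Next I would fix two valuations $s_i < t_i$ in $[a_i, z_i]$. Writing incentive compatibility once for a CR of type $t_i$ that contemplates reporting $s_i$, and once for a CR of type $s_i$ that contemplates reporting $t_i$, and substituting the two expressions for $U_i(t_i)$ and $U_i(s_i)$ above, I would obtain the two-sided bound
\[
(t_i - s_i)\,\Psi_i(s_i) \;\leq\; U_i(t_i) - U_i(s_i) \;\leq\; (t_i - s_i)\,\Psi_i(t_i).
\]
Dividing by $t_i - s_i > 0$ shows at once that $\Psi_i$ is nondecreasing, which is consistent with the assumed monotonicity of $U_i$ in its true valuation (and $\Psi_i \geq 0$ follows from $q_0, \psi_i \geq 0$), and that the difference quotient of $U_i$ is squeezed between $\Psi_i(s_i)$ and $\Psi_i(t_i)$.

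To pass from these finite-difference bounds to the integral, I would take any partition $a_i = x_0 < x_1 < \cdots < x_n = t_i$, apply the bound on each subinterval $[x_{k-1}, x_k]$, and sum. The left-hand side telescopes to $U_i(t_i) - U_i(a_i)$, while the two flanking sums $\sum_k \Psi_i(x_{k-1})(x_k - x_{k-1})$ and $\sum_k \Psi_i(x_k)(x_k - x_{k-1})$ are exactly the lower and upper Riemann sums of $\Psi_i$. Since $\Psi_i$ is monotone and bounded (indeed $0 \leq \Psi_i \leq q_0 \leq 1$ because $\psi_i \in [0,1]$), it is Riemann integrable, so both sums converge to $\int_{a_i}^{t_i} \Psi_i(v_i)\, dv_i$ as the mesh vanishes; the squeeze theorem then yields $U_i(t_i) - U_i(a_i) = \int_{a_i}^{t_i} \Psi_i(v_i)\, dv_i$, which is Equation (\ref{Eqn: U_i(v_i) = U_i(a_i) + ...}). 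Equivalently, one may observe that $U_i(t_i) = \max_{v_i}\,[\,t_i \Psi_i(v_i) - B_i(v_i)\,] - c_p$ is a supremum of affine functions of $t_i$, hence convex and absolutely continuous, and invoke the envelope theorem to get $U_i'(t_i) = \Psi_i(t_i)$ almost everywhere before integrating.

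The step requiring the most care is this last one: incentive compatibility alone does not grant pointwise differentiability of $U_i$, so the identity must be established through the monotonicity-plus-Riemann-sum squeeze above (or through the convexity/absolute-continuity route), rather than by naively differentiating. Consequently, the conceptual crux is deriving the monotonicity of $\Psi_i$ from the \emph{two-directional} use of incentive compatibility, which is precisely what makes both the integrability and the squeeze argument go through.
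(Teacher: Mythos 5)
Your proposal is correct and takes essentially the same route as the paper: applying incentive compatibility in both directions between a pair of types to obtain the two-sided sandwich $(t_i - v_i)\Psi_i(v_i) \leq U_i(t_i) - U_i(v_i) \leq (t_i - v_i)\Psi_i(t_i)$ (the paper's Equation (\ref{Eqn: U_i inequality})), and then integrating. In fact your write-up is more careful than the paper's at the final step, since you explicitly derive the monotonicity of $\Psi_i$ and carry out the telescoping Riemann-sum squeeze, whereas the paper dispatches this with the single assertion that the inequality ``allows $U_i$ to be Riemann-integrable.''
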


\begin{proof}
	Let us initially consider the case where $a_i \leq v_i \leq t_i \leq b_i$. Expanding the RHS of Equation (\ref{Eqn: Incentive Compatibility}), we have Equation (\ref{Eqn: U_i(v_i) >=  U_i(s_i) + ---}).
	
\begin{equation}
		\begin{array}{lcl}
			U_i (\mbox{\boldmath$\psi$\unboldmath}, \mathbf{b}, t_i) & \geq & \displaystyle \int \left[ q_0 \psi_i(v_i, \mathbf{t_{-i}}) t_i - b_i(v_i, \mathbf{t_{-i}}) - c_p \right] \cdot
			\\[2ex]
			&& \qquad \qquad \qquad \qquad \qquad \qquad \displaystyle p_{-i}(\mathbf{t_{-i}}) d\mathbf{t_{-i}}
			\\[3ex]
			& = & \displaystyle \int \left[ q_0 \psi_i(v_i, \mathbf{t_{-i}}) (v_i + t_i - v_i) \right.
			\\[2ex]
			&& \qquad \qquad \displaystyle \left. - b_i(v_i, \mathbf{t_{-i}}) - c_p \right] p_{-i}(\mathbf{t_{-i}}) d\mathbf{t_{-i}}
			\\[3ex]
			& = & \displaystyle U_i(\mbox{\boldmath$\psi$\unboldmath}, \mathbf{b}, v_i) + (t_i - v_i) \Psi_i(v_i).
		\end{array}
		\label{Eqn: U_i(v_i) >=  U_i(s_i) + ---}
\end{equation}
	
	In the case where $a_i \leq t_i \leq v_i \leq b_i$, Equation (\ref{Eqn: U_i(v_i) >=  U_i(s_i) + ---}) reduces to
	\begin{equation}
		\begin{array}{c}
			U_i (\mbox{\boldmath$\psi$\unboldmath}, \mathbf{b}, v_i) \geq U_i(\mbox{\boldmath$\psi$\unboldmath}, \mathbf{b}, t_i) + (v_i - t_i) \Psi_i(t_i).
		\end{array}
		\label{Eqn: U_i(v_i) >=  U_i(s_i) + --- 2}
	\end{equation}
	
	Combining Equations (\ref{Eqn: U_i(v_i) >=  U_i(s_i) + ---}) and (\ref{Eqn: U_i(v_i) >=  U_i(s_i) + --- 2}), we have	
	\begin{equation}
		\begin{array}{c}
			(t_i - v_i)\Psi_i(v_i) \ \leq U_i (\mbox{\boldmath$\psi$\unboldmath}, \mathbf{b}, t_i) - U_i (\mbox{\boldmath$\psi$\unboldmath}, \mathbf{b}, v_i) 
			\\[1ex]
			\qquad \qquad \qquad \qquad \qquad \leq \ (t_i - v_i)\Psi_i(t_i).
		\end{array}
		\label{Eqn: U_i inequality}
	\end{equation}
	If the expected utility function $U_i$ is a monotonically increasing function of the true valuation $t_i$, then Equation (\ref{Eqn: U_i inequality}) allows $U_i$ to be Riemann-integrable, resulting in Equation (\ref{Eqn: U_i(v_i) = U_i(a_i) + ...}).
\end{proof}

Next, we focus our attention on the properties of $U_0$.
\begin{lemma}
	The expected utility of the FC can be expressed as
	\begin{equation}
			U_0 (\mbox{\boldmath$\psi$\unboldmath}, \mathbf{b}) = \displaystyle \displaystyle - \sum_{i=1}^N U_i(\mbox{\boldmath$\psi$\unboldmath}, \mathbf{b}, a_i) + T,
	\end{equation}
	where $$T = \displaystyle \mathbb{E} \left[ \sum_{i=1}^N q_0 \psi_i \left\{ t_i - \frac{1 - F_i(t_i)}{p_i(t_i)} - \frac{q_1}{q_0} c_{coll} \right\} - N c_p \right],$$ and $F_i(t_i) = \displaystyle \int_{t_i}^{b_i} p_i(s_i) ds_i$.
	
	\label{Lemma: Properties of U_fc}
\end{lemma}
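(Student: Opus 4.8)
The plan is to follow Myerson's revenue-to-virtual-valuation reduction, adapted here to accommodate the collision and participation costs. The idea is to eliminate the payments $\mathbf{b}$ from $U_0$ by expressing the aggregate expected payment $\mathbb{E}_{\mathbf{t}}\left[\sum_{i} b_i\right]$ in terms of the CRs' expected utilities and expected allocations, and then to substitute the integral representation of $U_i$ established in Lemma \ref{Lemma: Properties of U_i}.

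First I would take the definition of the $i^{th}$ CR's utility, $U_i(\mbox{\boldmath$\psi$\unboldmath}, \mathbf{b}, t_i) = \mathbb{E}_{\mathbf{t_{-i}}}\left[q_0 \psi_i t_i - b_i - c_p\right]$, and average it over $t_i$ as well, so that $\mathbb{E}_{t_i}\left[U_i(\mbox{\boldmath$\psi$\unboldmath}, \mathbf{b}, t_i)\right] = \mathbb{E}_{\mathbf{t}}\left[q_0 \psi_i t_i - b_i - c_p\right]$. Solving for the payment term gives $\mathbb{E}_{\mathbf{t}}[b_i] = \mathbb{E}_{\mathbf{t}}[q_0 \psi_i t_i] - c_p - \mathbb{E}_{t_i}[U_i]$. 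Summing over $i \in \mathbb{N}$ and inserting this into the moderator's utility $U_0(\mbox{\boldmath$\psi$\unboldmath}, \mathbf{b}) = \mathbb{E}_{\mathbf{t}}\left[\sum_i b_i - q_1 c_{coll}\sum_i \psi_i\right]$ from Equation (\ref{Eqn: Moderator Expected Utility}) yields $U_0 = \mathbb{E}_{\mathbf{t}}\left[\sum_i q_0 \psi_i t_i - q_1 c_{coll}\sum_i \psi_i\right] - N c_p - \sum_i \mathbb{E}_{t_i}[U_i]$. At this point I invoke Lemma \ref{Lemma: Properties of U_i} to replace $\mathbb{E}_{t_i}[U_i]$ by $U_i(\mbox{\boldmath$\psi$\unboldmath}, \mathbf{b}, a_i) + \mathbb{E}_{t_i}\left[\int_{a_i}^{t_i}\Psi_i(v_i)\,dv_i\right]$; the constant boundary terms immediately assemble into the $-\sum_i U_i(\mbox{\boldmath$\psi$\unboldmath}, \mathbf{b}, a_i)$ appearing in the claim.

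The crux of the argument is the manipulation of $\mathbb{E}_{t_i}\left[\int_{a_i}^{t_i}\Psi_i(v_i)\,dv_i\right] = \int_{a_i}^{z_i} p_i(t_i)\left(\int_{a_i}^{t_i}\Psi_i(v_i)\,dv_i\right)dt_i$. I would swap the order of integration over the triangular region $a_i \leq v_i \leq t_i \leq z_i$, which converts the inner mass $\int_{v_i}^{z_i} p_i(t_i)\,dt_i$ into the survival function $1 - F_i(v_i)$, giving $\int_{a_i}^{z_i}\Psi_i(v_i)\big(1 - F_i(v_i)\big)\,dv_i$. Multiplying and dividing the integrand by $p_i(v_i)$ and unfolding $\Psi_i(v_i) = \int q_0 \psi_i(v_i, \mathbf{t_{-i}}) p_{-i}(\mathbf{t_{-i}})\,d\mathbf{t_{-i}}$ (using independence of the CRs' valuations so that $p_i(v_i)p_{-i}(\mathbf{t_{-i}})$ is the joint density) recasts this quantity as $\mathbb{E}_{\mathbf{t}}\left[q_0 \psi_i \frac{1 - F_i(t_i)}{p_i(t_i)}\right]$. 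Collecting the three expectation terms under a single $\mathbb{E}_{\mathbf{t}}$ and factoring out $q_0 \psi_i$ --- rewriting the collision term via $q_1 c_{coll}\psi_i = q_0 \psi_i \cdot \frac{q_1}{q_0} c_{coll}$ --- produces exactly the bracketed virtual-valuation expression defining $T$, so that $U_0 = -\sum_i U_i(\mbox{\boldmath$\psi$\unboldmath}, \mathbf{b}, a_i) + T$.

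The step I expect to be the main obstacle is this order-of-integration interchange and its justification: one must ensure $\Psi_i$ is integrable and that the monotonicity hypothesis of Lemma \ref{Lemma: Properties of U_i} legitimizes writing $U_i$ as the stated integral before applying Fubini's theorem. The only other point requiring care is the bookkeeping of the collision cost --- making sure the factor $q_1 c_{coll}$ is correctly absorbed into each per-CR virtual valuation as $-\frac{q_1}{q_0}c_{coll}$ rather than being left outside the bracket --- and confirming that the participation cost contributes the clean $-N c_p$ term.
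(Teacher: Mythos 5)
Your proposal is correct and follows essentially the same route as the paper's own proof: eliminating the payments via the CRs' expected utilities (your ``solve for $\mathbb{E}_{\mathbf{t}}[b_i]$'' step is algebraically identical to the paper's add-and-subtract of $\sum_i (q_0\psi_i t_i - c_p)$), substituting the integral representation from Lemma~\ref{Lemma: Properties of U_i}, interchanging the order of integration to produce the survival factor $1-F_i$, and then multiplying and dividing by $p_i$ to absorb everything, including the collision term rewritten as $q_0\psi_i\cdot\frac{q_1}{q_0}c_{coll}$, into the single expectation defining $T$. Your use of $z_i$ as the upper limit of integration is in fact cleaner than the paper's notation, which reuses $b_i$ for that bound.
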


\begin{proof}
	Let us start from the definition of $U_0$, as given in Equation \eqref{Eqn: Moderator Expected Utility}. Therefore, we have
	\begin{equation}
			U_0 (\mbox{\boldmath$\psi$\unboldmath}, \mathbf{b}) = \mathbb{E}_{\mathbf{t}} \left[ \sum_{i=1}^N b_i  - q_1 c_{coll} \sum_{i = 1}^N \psi_i \right].
			\label{Eqn: U_0 defn}
	\end{equation}
	
	Adding and subtracting $\displaystyle \sum_{i=1}^N \left( q_0 \psi_i t_i - c_p \right)$ in the RHS of Equation \eqref{Eqn: U_0 defn}, we have	
\begin{equation}
		\begin{array}{lcl}
			U_0 (\mbox{\boldmath$\psi$\unboldmath}, \mathbf{b}) & = & \displaystyle \mathbb{E}_{\mathbf{t}} \left[ \sum_{i=1}^N \left( c_p + b_i - q_0 \psi_i t_i \right) \right.
			\\[2ex]
			&& \qquad \left. \displaystyle - q_1 c_{coll} \sum_{i = 1}^N \psi_i - N c_p + q_0 \sum_{i=1}^N \psi_i t_i \right]
			\\[3ex]
			& = & \displaystyle - \sum_{i=1}^N \int U_i(\mbox{\boldmath$\psi$\unboldmath}, \mathbf{b}, t_i) p_i(t_i) dt_i
			\\[2ex]
			&& \qquad \left. \displaystyle + \mathbb{E} \left[  \sum_{i=1}^N \psi_i \left\{q_0 t_i -  q_1 c_{coll} \right\} - N c_p \right] \right. .
		\end{array}
		\label{Eqn: Expanded U_0}
	\end{equation}
	
	Substituting the result of Lemma \ref{Lemma: Properties of U_i} in Equation (\ref{Eqn: Expanded U_0}), we have
	\begin{equation}
		\begin{array}{lcl}
			U_0(\mbox{\boldmath$\psi$\unboldmath}, \mathbf{b}) & = & \displaystyle - \sum_{i=1}^N U_i(\mbox{\boldmath$\psi$\unboldmath}, \mathbf{b}, a_i) + T,
		\end{array}
		\label{Eqn: Expanded U_{fc} 2}
	\end{equation}
	where
	\begin{equation}
		\begin{array}{lcl}
			T & = & \displaystyle \mathbb{E}_{\mathbf{t}} \left[ q_0 \sum_{i=1}^N \psi_i t_i - q_1 c_{coll} \sum_{i = 1}^N \psi_i - N c_p \right]
			\\[2ex]
			&& \qquad \displaystyle - \sum_{i=1}^N \int_{a_i}^{b_i} \left( \int_{a_i}^{t_i} \Psi(s_i) ds_i \right) p_i(t_i) dt_i.
		\end{array}
		\label{Eqn: T}
	\end{equation}
	
	Henceforth, we focus our attention to the structure of $T$, as defined in Equation \eqref{Eqn: T}. Changing the order of integration in the second term of the R.H.S of Equation (\ref{Eqn: T}), and substituting $\pi_1$ in place of $1 - \pi_0$, we have Equation (\ref{Eqn: T-2}).
\begin{figure*}[!t]
\normalsize
\setcounter{MYtempeqncnt}{\value{equation}}
	\begin{equation}
		\begin{array}{lcl}
			T & = & \displaystyle \mathbb{E} \left[ q_0 \sum_{i=1}^N \psi_i t_i - q_1 c_{coll} \sum_{i = 1}^N \psi_i - N c_p \right] - \sum_{i=1}^N \int_{a_i}^{b_i} \left( \int_{s_i}^{b_i} \Psi(s_i) p_i(t_i) dt_i \right) ds_i.
			\\
			\\
			& = & \displaystyle \mathbb{E} \left[ q_0 \sum_{i=1}^N \psi_i t_i - q_1 c_{coll} \sum_{i = 1}^N \psi_i - N c_p \right] - \sum_{i=1}^N \int_{a_i}^{b_i} (1 - F_i(s_i)) \Psi(s_i) ds_i
			\\
			\\
			& = & \displaystyle \mathbb{E} \left[ q_0 \sum_{i=1}^N \psi_i t_i -  q_1 c_{coll} \sum_{i = 1}^N \psi_i - N c_p \right] - \sum_{i=1}^N \int q_0 \psi_i \left(\frac{1 - F_i(t_i)}{p_i(t_i)}\right) p(\mathbf{t}) d\mathbf{t}.
		\end{array}
		\label{Eqn: T-2}
	\end{equation}
\addtocounter{MYtempeqncnt}{1}
\setcounter{equation}{\value{MYtempeqncnt}}
\hrulefill
\vspace*{2pt}
\end{figure*}
	
	Rearranging the terms in Equation (\ref{Eqn: T-2}), we have
	\begin{equation}
		T = \displaystyle \mathbb{E} \left[ \sum_{i=1}^N q_0 \psi_i \left\{ t_i - \frac{1 - F_i(t_i)}{p_i(t_i)} - \frac{q_1}{q_0} c_{coll} \right\} - N c_p \right].
		\label{Eqn: T-3}
	\end{equation}
\end{proof}

Having used the incentive compatibility condition, given in Equation (\ref{Eqn: Incentive Compatibility}) in Lemma \ref{Lemma: Properties of U_fc}, we can rewrite Problem \ref{Prob. Statement: Framework-1} as follows.

\begin{prob-statement}
	\begin{flalign*}
    	\displaystyle \argmax_{\mbox{\boldmath$\psi$\unboldmath}, \mathbf{b}} \quad & \displaystyle T - \sum_{i=1}^N U_i(\mbox{\boldmath$\psi$\unboldmath}, \mathbf{b}, a_i) \quad \mbox{ s.t.}
        \\ & \text{1. }~ U_i (\mbox{\boldmath$\psi$\unboldmath}, \mathbf{b}, t_i) \geq 0, \quad \forall i \in \mathbb{N}. 
        \\ & \text{2. }~ \psi_i \geq 0, \quad \forall i \in \mathbb{N}
        \\ & \text{3. }~ \displaystyle \sum_{i \in \mathbb{N}} \psi_i \leq 1.
    \end{flalign*}
    \label{Prob. Statement: Framework-1-reduced}
\end{prob-statement}

Note that the term $Nc_p$ can be interpreted as the compensation that the FC pays back to the CRs to incentivize their participation in the spectrum sensing task.

Now, we focus our attention on solving Problem \ref{Prob. Statement: Framework-1-reduced}. Using Lemmas \ref{Lemma: Properties of U_i} and \ref{Lemma: Properties of U_fc}, we prove the most important result of this section in the following theorem.

\begin{thrm}
	For all $i \in \mathbb{N}$, if the function $w_i(t_i) = \displaystyle t_i - \frac{1 - F_i(t_i)}{p_i(t_i)}$ is strictly increasing in $t_i$ (\textbf{regularity} condition), then the optimal allocation, that maximizes the moderator's revenue, is given by
	\begin{equation}
		\psi_i^* = \left\{
		\begin{array}{ccl}
			\displaystyle \Delta_i & ; & \mbox{if} \ \ |\mathbb{M}(\mathbf{t})| > 0, \ \forall \ i \in \mathbb{M}(\mathbf{t}).
			\\
			\\
			0 & ; & otherwise
		\end{array} \right.
		\label{Eqn: Optimal Allocation}
	\end{equation}
	for any $\Delta_i$ such that $\displaystyle \sum_{i \in \mathbb{M}(\mathbf{t})} \Delta_i = 1$, and where $\mathbb{M}(\mathbf{t}) = \left\{\displaystyle i \ \left| \ i = \displaystyle \argmax_{j \in \mathbb{N}} w_j(t_j) \geq \frac{q_1}{q_0} c_{coll} \right. \right\}$.
	
	Also, the optimal payments made by the CRs to the FC are given by
	\begin{equation}
		b_i^* = q_0 \psi_i^* t_i - c_p - \displaystyle q_0 \int_{a_i}^{t_i} \psi_i^* \left( s_i, \mathbf{t_{-i}} \right) ds_i.
		\label{Eqn: Optimal Price / Bid}
	\end{equation}
	\label{Thrm: Optimal Auction}
\end{thrm}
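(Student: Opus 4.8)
The plan is to solve the reduced problem (Problem~\ref{Prob. Statement: Framework-1-reduced}) by decoupling its two terms and then maximizing $T$ \emph{pointwise} over each realization of $\mathbf{t}$. First I would dispose of the individual-rationality term. By Lemma~\ref{Lemma: Properties of U_i}, the monotonicity hypothesis makes $U_i(\mbox{\boldmath$\psi$\unboldmath}, \mathbf{b}, t_i)$ nondecreasing in $t_i$, so the constraint $U_i \geq 0$ binds at the lowest type $t_i = a_i$. Since the objective carries the term $-\sum_i U_i(\mbox{\boldmath$\psi$\unboldmath}, \mathbf{b}, a_i)$, raising any $U_i(\cdot, a_i)$ above zero only lowers the moderator's revenue; hence at the optimum $U_i(\mbox{\boldmath$\psi$\unboldmath}, \mathbf{b}, a_i) = 0$ for every $i$, and the problem collapses to maximizing $T$ subject only to the allocation-feasibility constraints.

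Next I would exploit the structure of $T$ from Lemma~\ref{Lemma: Properties of U_fc}, rewritten as
\[
T = \mathbb{E}_{\mathbf{t}}\left[ \sum_{i=1}^N \psi_i \left( q_0 w_i(t_i) - q_1 c_{coll} \right) \right] - N c_p .
\]
The constant $-N c_p$ is irrelevant to the optimization, and the constraints $\psi_i \geq 0$, $\sum_i \psi_i \leq 1$ hold for \emph{every} realization while $\psi_i$ may depend arbitrarily on $\mathbf{t}$. Consequently the expectation is maximized by maximizing its integrand pointwise: for each fixed $\mathbf{t}$, I would solve the linear program of maximizing $\sum_i \psi_i (q_0 w_i(t_i) - q_1 c_{coll})$ over the simplex $\{\psi_i \geq 0,\ \sum_i \psi_i \leq 1\}$. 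Its solution is immediate---place all allocable mass on the CR(s) attaining the largest coefficient, and only when that coefficient is nonnegative, i.e. $w_i(t_i) \geq \frac{q_1}{q_0} c_{coll}$; otherwise allocate nothing. This is exactly the winner set $\mathbb{M}(\mathbf{t})$ with arbitrary tie-splitting weights $\Delta_i$ summing to one, which yields $\psi_i^*$.

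To recover the payments I would return to the envelope characterization of Lemma~\ref{Lemma: Properties of U_i} with $U_i(\mbox{\boldmath$\psi$\unboldmath}, \mathbf{b}, a_i) = 0$, namely $U_i(\mbox{\boldmath$\psi$\unboldmath}, \mathbf{b}, t_i) = \int_{a_i}^{t_i} \Psi_i(v_i)\, dv_i$. Since $U_i(\mbox{\boldmath$\psi$\unboldmath}, \mathbf{b}, t_i) = \mathbb{E}_{\mathbf{t}_{-i}}[q_0 \psi_i t_i - b_i - c_p]$, I would posit the pointwise payment $b_i^*$ of Equation~(\ref{Eqn: Optimal Price / Bid}) and verify, by substituting it back and interchanging the expectation over $\mathbf{t}_{-i}$ with the inner integral over $s_i$, that it reproduces $\int_{a_i}^{t_i}\Psi_i(v_i)\, dv_i$ and satisfies $U_i(\cdot, a_i) = 0$. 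This confirms both individual rationality (at equality for the lowest type) and the integral characterization used above.

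The remaining, and most delicate, step is to close the logical loop: Lemmas~\ref{Lemma: Properties of U_i} and~\ref{Lemma: Properties of U_fc} were derived under the premise that each $U_i$ is monotone in its type, so I must show the candidate optimizer actually satisfies this. Here the \textbf{regularity} condition does the work: because $w_i(t_i)$ is strictly increasing, raising $t_i$ can only move CR $i$ into $\mathbb{M}(\mathbf{t})$ and never out, so $\Psi_i(v_i)$ is nondecreasing, which forces $U_i$ to be nondecreasing and simultaneously certifies incentive compatibility of $(\psi^*, b^*)$. I expect this consistency check---ensuring the pointwise-optimal allocation is monotone so that the relaxed optimum is attained \emph{within} the incentive-compatible class rather than being a mere upper bound---to be the main obstacle; without regularity the pointwise maximizer could be non-monotone and the envelope reduction would fail.
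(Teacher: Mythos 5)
Your proposal is correct and takes essentially the same route as the paper's proof: force $U_i(\mbox{\boldmath$\psi$\unboldmath},\mathbf{b},a_i)=0$ via individual rationality, reduce the objective to $T$ using Lemmas \ref{Lemma: Properties of U_i} and \ref{Lemma: Properties of U_fc}, maximize $T$ pointwise in $\mathbf{t}$ over the allocation simplex to obtain the winner set $\mathbb{M}(\mathbf{t})$ with arbitrary tie-splitting $\Delta_i$, and recover the payments from the envelope identity (the paper likewise selects the pointwise payment as ``one possible selection''). Your final step---explicitly verifying under regularity that $\Psi_i$ is nondecreasing so the relaxed optimum is attained within the incentive-compatible class---is a consistency check the paper leaves implicit rather than a different approach.
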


\begin{proof}
	Note that, the term $U_i(a_i)$ has a negative contribution to the utility of the moderator. As we know that $U_i(\mbox{\boldmath$\psi$\unboldmath}, \mathbf{b}, t_i)$ is non-negative for all $t_i \in [a_i, z_i]$ and $i \in \mathbb{N}$ from the individual rationality criterion, as given in Equation \eqref{Eqn: Individual Rationality}, the moderator will let $U_i(\mbox{\boldmath$\psi$\unboldmath}, \mathbf{b}, a_i) = 0, \forall i \in \mathbb{N}$. Substituting this in Equation \eqref{Eqn: U_i(v_i) = U_i(a_i) + ...}, we have
	\begin{equation}
		U_i(t_i) = \displaystyle \int_{a_i}^{t_i} \left( \int q_0 \psi_i(v_i, \mathbf{t_{-i}}) p_{-i}(\mathbf{t_{-i}}) d\mathbf{t_{-i}} \right) dv_i.
		\label{Eqn: Optimal payment calc}
	\end{equation}
	
	Substituting the definition of $U_i(t_i)$ (refer to Equation \eqref{Eqn: CR Expected Utility}) in Equation \eqref{Eqn: Optimal payment calc}, we have
	\begin{equation}
		\displaystyle \mathbb{E}_{\mathbf{t}_{-i}} \left[ q_0 \psi_i t_i - b_i - c_p \right] = \displaystyle \mathbb{E}_{\mathbf{t}_{-i}} \left[ q_0 \int_{a_i}^{t_i} \psi_i(v_i, \mathbf{t_{-i}}) dv_i \right].
		\label{Eqn: Optimal payment calc 2}
	\end{equation}
	
	Thus, one of the possible selection of optimal payments is given by
	\begin{equation}
		\displaystyle q_0 \psi_i^* t_i - b_i - c_p  = q_0 \int_{a_i}^{t_i} \psi_i^*(v_i, \mathbf{t_{-i}}) dv_i.
		\label{Eqn: Optimal payment calc 3}
	\end{equation}
	
	Rearranging the terms in Equation \eqref{Eqn: Optimal payment calc 3}, we have Equation \eqref{Eqn: Optimal Price / Bid}.
	
	Driving the term $U_i(a_i) = 0$, the utility of the moderator can be rewritten as follows.
	\begin{equation}
		U_0(\mbox{\boldmath$\psi$\unboldmath}, \mathbf{b}) = T.
		\label{Eqn: Moderator-Utility-3}
	\end{equation}
	
	Therefore, we focus our attention on $T$, which is restated as follows.
	\begin{equation}
		T = \displaystyle \mathbb{E} \left[ \sum_{i=1}^N q_0 \psi_i \left\{ t_i - \frac{1 - F_i(t_i)}{p_i(t_i)} - \frac{q_1}{q_0} c_{coll} \right\} - N c_p \right].
		\label{Eqn: T-3b}
	\end{equation}
	
	Let us denote $w_i(t_i) = \displaystyle t_i - \frac{1 - F_i(t_i)}{p_i(t_i)}$. Note that, if $w_i(t_i)$ is an increasing function of $t_i$ (\emph{regularity} condition), then the moderator maximizes $T$ by allocating the spectrum to the
	CRs in the set $\mathbb{M}(\mathbf{t}) = \left\{\displaystyle i \ \left| \ i = \displaystyle \argmax_{j \in \mathbb{N}} w_j(t_j) \geq \frac{q_1}{q_0} c_{coll} \right. \right\}$. Whenever $|\mathbb{M}(\mathbf{t})| \geq 2$, the tie-breaker can be resolved in any manner without affecting the optimal value of $T$, as long as the allocation of the spectrum is restricted to the nodes in $\mathbb{M}(\mathbf{t})$.	
	

\end{proof}

Since the allocation of the spectrum depends on $w_i(t_i)$, instead of the valuation $t_i$ at the $i^{th}$ CR, we call the term $w_i(t_i)$ as the \emph{virtual} valuation of the $i^{th}$ CR.

Note that the $i^{th}$ CR's payment, as given in Equation (\ref{Eqn: Optimal Price / Bid}), has an integral term $\displaystyle \int_{a_i}^{t_i} \psi_i^* \left( s_i, \mathbf{t_{-i}} \right) ds_i$, which needs to be computed in order to find the payments of the $i^{th}$ CR. The computation of this term is very interesting since it requires us to investigate four cases, as follows. Within our analysis, we denote $t_*$ as the valuation that corresponds to the second highest element (in value) within the array of virtual valuations $\mathbf{w}(\mathbf{t})$.

\paragraph{CASE-1}[$i \notin \mathbb{M}(\mathbf{t})$]
From Equation (\ref{Eqn: Optimal Allocation}) in Theorem \ref{Thrm: Optimal Auction}, we have $\psi_i^*(t_i, \mathbf{t_{-i}}) = 0$. Therefore, $\displaystyle \int_{a_i}^{t_i} \psi_i^* \left( s_i, \mathbf{t_{-i}} \right) ds_i = 0$, and consequently,
\begin{equation}
	b_i = - c_p.
	\label{Eqn: Optimal Payment - CASE1}
\end{equation}

\paragraph{CASE-2}[$i \in \mathbb{M}(\mathbf{t})$, $|\mathbb{M}(\mathbf{t})| \geq 2$]
As shown in Equation (\ref{Eqn: Optimal Allocation}), we have $\psi_i^*(t_i, \mathbf{t_{-i}}) = \Delta_i$. But, if the $i^{th}$ CR chooses to deviate to a valuation smaller than $t_i$, then $i \ni \mathbb{M}(\mathbf{t})$ and $\displaystyle \int_{a_i}^{t_i} \psi_i^* \left( s_i, \mathbf{t_{-i}} \right) ds_i = 0$. Therefore,
\begin{equation}
	b_i = q_0 \Delta_i t_i - c_p.
	\label{Eqn: Optimal Payment - CASE2}
\end{equation}

\paragraph{CASE-3}[$i \in \mathbb{M}(\mathbf{t})$, $|\mathbb{M}(\mathbf{t})| = 1$, $t_* \geq a_i$]
If $t_* \geq a_i$, then the $i^{th}$ CR gets the spectrum as long as $t_i \geq t_*$. In other words,
$$\displaystyle \int_{a_i}^{t_i} \psi_i^* \left( s_i, \mathbf{t_{-i}} \right) ds_i = \int_{t_*}^{t_i} 1 \cdot ds_i = (t_i - t_*).$$ Therefore, the payment at the $i^{th}$ CR can be calculated as
\begin{equation}
	\begin{array}{lcl}
		b_i & = & q_0 t_i - c_p - q_0 (t_i - t_*)
		\\
		& = & q_0 t_* - c_p.
	\end{array}
	\label{Eqn: Optimal Payment - CASE3}
\end{equation}

\paragraph{CASE-4}[$i \in \mathbb{M}(\mathbf{t})$, $|\mathbb{M}(\mathbf{t})| = 1$, $t_* < a_i$]
Since $t_* < a_i$, then the $i^{th}$ CR gets the spectrum for any $t_i \in [a_i, b_i]$. In other words,
$$\displaystyle \int_{a_i}^{t_i} \psi_i^* \left( s_i, \mathbf{t_{-i}} \right) ds_i = (t_i - a_i).$$ Therefore, the payment at the $i^{th}$ CR can be calculated as
\begin{equation}
	\begin{array}{lcl}
		b_i & = & q_0 t_i - c_p - q_0 (t_i - a_i)
		\\
		& = & q_0 a_i - c_p.
	\end{array}
	\label{Eqn: Optimal Payment - CASE4}
\end{equation}

Combining Equations (\ref{Eqn: Optimal Payment - CASE1})-(\ref{Eqn: Optimal Payment - CASE4}), the $i^{th}$ CR's payment to the moderator is summarized as follows.
\begin{equation}
	b_i =
	\begin{cases}
		\ - c_p; & \mbox{if } i \notin \mathbb{M}(\mathbf{t})
		\\
		\ q_0 \Delta_i t_i - c_p & \mbox{if } i \in \mathbb{M}(\mathbf{t}) \mbox{ and } |\mathbb{M}(\mathbf{t})| \geq 2
		\\
		\ q_0 t_* - c_p & \mbox{if } i \in \mathbb{M}(\mathbf{t}), \ |\mathbb{M}(\mathbf{t})| = 1 \mbox{ and } t_* \geq a_i
		\\
		\ q_0 a_i - c_p & \mbox{if } i \in \mathbb{M}(\mathbf{t}), \ |\mathbb{M}(\mathbf{t})| = 1 \mbox{ and } t_* < a_i.
	\end{cases}
	\label{Eqn: Optimal Payment - Final}
\end{equation}

Having computed the CRs' payments, in the case where $|\mathbb{M}(\mathbf{t})| \geq 2$, if the allocation is chosen to be $\Delta_i = \displaystyle \frac{1}{|\mathbb{M}(\mathbf{t})|}$, then, the proposed auction mechanism can be summarized as an algorithm, denoted \emph{Optimal Auction under Uncertain Spectrum Availability} (in short, OAUSA), in Figure \ref{Algo: SpecAlloc}. In case, the designer would like to choose a different allocation \boldmath$\Delta$\unboldmath, appropriate changes can be made by replacing Line 16 in Figure \ref{Algo: SpecAlloc}.

\begin{figure}[t]
	\centering
	\begin{algorithmic}[1]
		\Procedure{OAUSA}{$\mathbf{t}$, $\mathbf{u}$}
		\State Fuse $\mathbf{u}$ into a global inference $u_0$
		\If{$u_0 = 1$}
			\ForAll{$i \in \mathbb{N}$}
				\State $\psi_i \gets 0$
				\State $b_i \gets -c_p$
			\EndFor
		\Else
			\State $w_i(t_i) \gets \displaystyle t_i - \frac{1 - F_i(t_i)}{p_i(t_i)}$, for all $i = 1, \cdots, N$.
			\State Find $\mathbb{M}(\mathbf{t}) = \{ i | i = \argmax_{j \in \mathbb{N}} w_j(t_j) \}$.
			\ForAll{$i \in \mathbb{N}$}
				\If{$i \notin \mathbb{M}(\mathbf{t})$}
					\State $\psi_i \gets 0$
					\State $b_i \gets - c_p$
				\ElsIf{$i \in \mathbb{M}(\mathbf{t})$, $|\mathbb{M}(\mathbf{t})| \geq 2$}
					\State $\Delta_i \gets \displaystyle \frac{1}{|\mathbb{M}(\mathbf{t})|}$
					\State $\psi_i \gets \Delta_i$
					\State $b_i \gets q_0 \Delta_i t_i - c_p$
				\ElsIf{$i \in \mathbb{M}(\mathbf{t})$, $|\mathbb{M}(\mathbf{t})| = 1$}
					\State $\psi_i \gets 1$.
					\State Find $t_*$ such that $w(t_*)$ is the second largest in the vector $\mathbf{w}(\mathbf{t})$.
					\If{$t_* \geq a_i$}
						\State $b_i \gets q_0 t_* - c_p$.
					\Else
						\State $b_i \gets q_0 a_i - c_p$.
					\EndIf
				\EndIf
			\EndFor
		\EndIf		
		\State \textbf{return} (\boldmath$\psi$\unboldmath, $\mathbf{b}$) 
		\EndProcedure
	\end{algorithmic}
	\caption{Pseudo-code for the proposed algorithm to find the optimal spectrum allocation at the moderator}
	\label{Algo: SpecAlloc}
\end{figure}

\section{Feasibility and Throughput Analysis \label{sec: Discussion}}

\subsection{Feasibility of the Proposed Auction \label{sec: Feasibility}}
	Note that the proposed auction in Theorem \ref{Thrm: Optimal Auction} is feasible only if $U_0 \geq 0$. Otherwise, it is more profitable for the moderator to have the spectrum for itself. In other words, from Equation \eqref{Eqn: Moderator-Utility-3}, we need $T \geq 0$. Therefore, if
	$$\displaystyle w_{max} =
	\begin{cases}
		w_i(t_i); & \mbox{ if } i = \displaystyle \argmax_{j \in \mathbb{N}} w_j(t_j)
		\\
		0; & \mbox{otherwise,}
	\end{cases}$$
	then it is expected that
	\begin{equation}
		\begin{array}{lcl}
			U_0 & = & T 
			\\[1ex]
			& = & q_0 \left[ \mathbb{E}[w_{max}] - \frac{q_1}{q_0} c_{coll} \right] - N c_p \geq 0.
		\end{array}
		\label{Eqn: U0 - final}
	\end{equation}
	
	Or, equivalently,
	\begin{equation}
		\displaystyle \mathbb{E}[w_{max}] \geq \displaystyle \frac{1}{q_0} N c_p + \frac{q_1}{q_0} c_{coll}.
		\label{Eqn: Favorable Condition}
	\end{equation}
	
	In other words, the proposed auction is feasible only when the expected value of the maximum valuation among all the CRs is greater than $\displaystyle \frac{1}{q_0} N c_p + \frac{q_1}{q_0} c_{coll}$. Note that the instantaneous value of $w_{max}$ need not necessarily be greater than $\displaystyle \frac{1}{q_0} N c_p + \frac{q_1}{q_0} c_{coll}$. Therefore, we have the following theorem.
	
\begin{thrm}
	The auction proposed in Theorem \ref{Thrm: Optimal Auction} is feasible only if
	$$\displaystyle \mathbb{E}[w_{max}] \geq \frac{1}{q_0} N c_p + \frac{q_1}{q_0} c_{coll}.$$
	\label{Thrm: Feasibility}
\end{thrm}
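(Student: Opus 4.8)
The plan is to show that the feasibility of the optimal auction forces the stated lower bound on $\mathbb{E}[w_{max}]$, by reducing the whole question to the sign of $T$ and then reading off the inequality. First I would invoke Definition \ref{Defn: Feasibility}: the optimal mechanism of Theorem \ref{Thrm: Optimal Auction} is already constructed to satisfy individual rationality and incentive compatibility together with $\sum_i \psi_i \leq 1$, so the only remaining feasibility requirement is $U_0 \geq 0$. Because that mechanism drives $U_i(\mbox{\boldmath$\psi$\unboldmath}, \mathbf{b}, a_i) = 0$ for every $i \in \mathbb{N}$, Equation (\ref{Eqn: Moderator-Utility-3}) collapses the moderator's utility to $U_0 = T$. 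Hence feasibility is equivalent to $T \geq 0$, and the entire proof is an analysis of $T$ under the optimal allocation.

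Next I would substitute the optimal allocation (\ref{Eqn: Optimal Allocation}) into the expression for $T$ in Equation (\ref{Eqn: T-3}). The key simplification is that $\psi_i^*$ is supported only on the set $\mathbb{M}(\mathbf{t})$, every member of which attains the \emph{same} maximal virtual valuation, while the shares obey $\sum_{i \in \mathbb{M}(\mathbf{t})} \Delta_i = 1$. Consequently, on every realization for which allocation occurs, the inner sum $\sum_i q_0 \psi_i^* \{ w_i(t_i) - \frac{q_1}{q_0} c_{coll} \}$ contracts to the single term $q_0 w_{max} - q_1 c_{coll}$, and this value is independent of how ties in $\mathbb{M}(\mathbf{t})$ are broken. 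Taking the expectation over $\mathbf{t}$ and subtracting the compensation $N c_p$ yields exactly Equation (\ref{Eqn: U0 - final}), namely $U_0 = q_0 \left[ \mathbb{E}[w_{max}] - \frac{q_1}{q_0} c_{coll} \right] - N c_p$.

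Finally, imposing $U_0 \geq 0$ on this identity and dividing through by $q_0 > 0$ rearranges directly into Equation (\ref{Eqn: Favorable Condition}), $\mathbb{E}[w_{max}] \geq \frac{1}{q_0} N c_p + \frac{q_1}{q_0} c_{coll}$, which is the asserted necessary condition. I would underscore the logical direction here: the statement is \emph{only if}, so the bound is necessary rather than sufficient, and this is precisely why it is phrased on the \emph{expected} maximal virtual valuation $\mathbb{E}[w_{max}]$ instead of the instantaneous $w_{max}$, which on any particular draw may fail to clear the threshold $\frac{q_1}{q_0} c_{coll}$ even when its average does.

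The step I expect to be the main obstacle is the bookkeeping that separates realizations in which $\mathbb{M}(\mathbf{t})$ is nonempty (allocation, $\sum_i \psi_i^* = 1$) from those in which it is empty (no allocation, $\psi_i^* = 0$), since the clean averaged form of Equation (\ref{Eqn: U0 - final}) subtracts the collision term $\frac{q_1}{q_0} c_{coll}$ uniformly. I would reconcile this by adopting the convention, built into the definition of $w_{max}$ just preceding the theorem, that $w_{max}$ is set to $0$ off the allocation event, so that the contraction of the sum and the resulting single identity hold uniformly over all realizations; the non-allocation draws then contribute nothing beyond the $-Nc_p$ term and the argument proceeds without case splitting.
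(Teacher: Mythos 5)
Your proposal is correct and follows the paper's own proof essentially verbatim: feasibility is reduced to $U_0 = T \geq 0$ via driving $U_i(\mbox{\boldmath$\psi$\unboldmath}, \mathbf{b}, a_i) = 0$ (Equation (\ref{Eqn: Moderator-Utility-3})), the optimal allocation of Theorem \ref{Thrm: Optimal Auction} contracts $T$ to $q_0\left[\mathbb{E}[w_{max}] - \frac{q_1}{q_0}c_{coll}\right] - Nc_p$ as in Equation (\ref{Eqn: U0 - final}), and rearranging yields Equation (\ref{Eqn: Favorable Condition}). Your explicit bookkeeping for tie-breaking within $\mathbb{M}(\mathbf{t})$ and for realizations with empty $\mathbb{M}(\mathbf{t})$ (the $w_{max}=0$ convention) is, if anything, more careful than the paper, which adopts the same convention silently and subtracts the collision term uniformly without comment.
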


Note that, if the $i^{th}$ CR does not get any spectrum, then the optimal choice of the payment is $b_i = -c_p$ which is negative. This simply means that the FC pays an amount $c_p$ back to the $i^{th}$ CR node, as a compensation for sensing. Consequently, the $i^{th}$ CR participates in the spectrum sensing task with no loss or gain locally. But, it improves the sensing performance at the FC and reduces the probability of collision of the moderator with the PU. This allows the moderator to accumulate a non-negative utility, on an average.

\begin{bluetext}
\subsection{Need for Spectrum Sensing \label{sec: Favorable-Conditions}}
Our proposed auction mechanism, OAUSA, is designed under the assumption that the moderator does not have complete knowledge about the availability of PU spectrum. Therefore, collaborative spectrum sensing is incorporated into the design of spectrum auctions, which introduced several new costs into our model. In this subsection, we investigate the merit of cooperative spectrum sensing by comparing our proposed auction mechanism, OAUSA, with traditional auction mechanisms which are not designed for scenarios where the spectrum is available with uncertainty. In traditional spectrum auctions, the moderator believes that the spectrum is always available for auctioning even though it does not have knowledge about uncertain spectrum availability\footnote{This consideration follows from the fact that traditional auction mechanisms assume that spectrum is always available for auctioning.}. Since the CRs do not pursue the task of spectrum sensing, $c_p = 0$, thereby resulting in a collision with the PUs with probability $\pi_1$, since the moderator always allocates the spectrum to the CRs irrespective of whether PU is using the spectrum or not. Let $\hat{U}_0$ and $\hat{U}_i$ denote the moderator's and the $i^{th}$ CR's utilities in the case of a traditional auction mechanism respectively. Then,

\begin{equation}
	\begin{array}{lcl}
		\hat{U}_0(\mbox{\boldmath$\psi$\unboldmath}, \mathbf{b}) & = & \displaystyle \mathbb{E}_{\mathbf{t}} \left[ \sum_{i = 1}^N b_i - \pi_1 c_{coll} \sum_{i = 1}^N \psi_i \right],
		\\
		\hat{U}_i(\mbox{\boldmath$\psi$\unboldmath}, \mathbf{b}, t_i) & = & \displaystyle \mathbb{E}_{\mathbf{t}_{-i}} \left[ \psi_i t_i - b_i \right].
	\end{array}
	\label{Eqn: Utility - Traditional Auctions}
\end{equation}

Having defined the utility of a traditional auction in the presence of uncertainly available spectrum, we investigate the conditions under which collaborative spectrum sensing provides a greater utility in the following lemma.

\begin{lemma}
Collaborative spectrum sensing in OAUSA is profitable only when
\begin{equation}
	q_1 \leq \pi_1 - N \frac{c_p}{c_{coll}}
\end{equation}
\end{lemma}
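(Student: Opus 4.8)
The plan is to show that collaborative sensing ``pays for itself'' precisely when the moderator's utility under OAUSA, $U_0$, dominates its utility under the traditional no-sensing mechanism, $\hat{U}_0$; that is, sensing is profitable only if $U_0 \geq \hat{U}_0$. I would begin from the two utility expressions, $U_0 = \mathbb{E}_{\mathbf{t}}\big[\sum_i b_i - q_1 c_{coll}\sum_i \psi_i\big]$ from Equation~(\ref{Eqn: Moderator Expected Utility}) and $\hat{U}_0 = \mathbb{E}_{\mathbf{t}}\big[\sum_i b_i - \pi_1 c_{coll}\sum_i \psi_i\big]$ from Equation~(\ref{Eqn: Utility - Traditional Auctions}), and compare the two mechanisms operating on the same allocation, with the understanding that the payment the CRs make for the spectrum itself is identical in both settings. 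The only structural differences are then (i) the collision probability, which sensing lowers from $\pi_1$ to $q_1 = \pi_1(1-Q_d)$, and (ii) the participation compensation $N c_p$ that the FC refunds to the CRs in OAUSA but not in the traditional auction (where $c_p = 0$).

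Next I would form the difference $U_0 - \hat{U}_0$. Writing each OAUSA payment as a pure spectrum payment minus the per-node refund $c_p$, the spectrum-revenue terms $\mathbb{E}[\sum_i \tilde{b}_i]$ are common to both utilities and cancel, leaving $U_0 - \hat{U}_0 = (\pi_1 - q_1)\,c_{coll}\,\mathbb{E}[\sum_i \psi_i] - N c_p$. Evaluating this on an allocation round, where the spectrum is actually handed out so that $\sum_i \psi_i = 1$, reduces it to the clean expression $U_0 - \hat{U}_0 = (\pi_1 - q_1)\,c_{coll} - N c_p$.

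Finally, imposing profitability $U_0 \geq \hat{U}_0$ yields $(\pi_1 - q_1)\,c_{coll} \geq N c_p$, and dividing by $c_{coll} > 0$ and rearranging gives the claimed bound $q_1 \leq \pi_1 - N c_p / c_{coll}$. The interpretation closes the argument: the expected collision cost saved by sensing, $(\pi_1 - q_1)\,c_{coll}$, must at least offset the total sensing compensation $N c_p$ that the moderator pays out; otherwise the traditional mechanism is preferable.

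I expect the main obstacle to be the cancellation step, namely rigorously arguing that the expected spectrum revenue $\mathbb{E}[\sum_i \tilde{b}_i]$ is genuinely the same in the two mechanisms so that only the collision and participation terms survive, and justifying that the allocation mass $\sum_i \psi_i$ can be taken equal to $1$ when the spectrum is allocated. Once the revenue terms are matched and the allocation is fixed, the remaining algebra is a one-line rearrangement.
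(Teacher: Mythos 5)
Your proposal is correct and follows essentially the same route as the paper: the paper likewise compares $U_0$ against $\hat{U}_0$ via the relation $\hat{U}_0 = U_0 - (\pi_1 - q_1)\,c_{coll} + N c_p$ (i.e., your $U_0 - \hat{U}_0 = (\pi_1 - q_1)\,c_{coll} - N c_p$) and imposes $U_0 \geq \hat{U}_0$ to obtain $q_1 \leq \pi_1 - N c_p/c_{coll}$. The two points you flag as obstacles --- cancellation of the common revenue term $\mathbb{E}\left[\sum_i b_i\right]$ and taking $\sum_i \psi_i = 1$ --- are exactly the steps the paper passes over silently by evaluating both utility expressions at the same $(\mbox{\boldmath$\psi$\unboldmath}, \mathbf{b})$, so your version is, if anything, slightly more explicit than the original.
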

\begin{proof}
Let us first express the utility of the traditional auction mechanism, as given in Equation (\ref{Eqn: Utility - Traditional Auctions}), in terms of the corresponding utilities in the proposed OAUSA framework, as follows.
\begin{equation}
	\begin{array}{lcl}
		\hat{U}_0(\mbox{\boldmath$\psi$\unboldmath}, \mathbf{b}) & = & \displaystyle U_0(\mbox{\boldmath$\psi$\unboldmath}, \mathbf{b}) - (\pi_1 - q_1) c_{coll} + N c_p,
		\\
		\hat{U}_i(\mbox{\boldmath$\psi$\unboldmath}, \mathbf{b}, t_i) & = & \displaystyle U_i(\mbox{\boldmath$\psi$\unboldmath}, \mathbf{b}, t_i) + (1 - q_0) t_i \mathbb{E}_{\mathbf{t}_{-i}} \left[ \psi_i \right].
	\end{array}
	\label{Eqn: Utility - Traditional Auctions}
\end{equation}

Therefore, the designer has an incentive to adopt OAUSA if $U_0 \geq \hat{U}_0$. In other words, OAUSA is favored only when $(\pi_1 - q_1) c_{coll} - N c_p \geq 0$. 
\end{proof}

Consequently, OAUSA gives a greater utility at the moderator than any traditional auction in scenarios where the cost of collision at the moderator is significantly larger than the cost of participation at the CRs $\left( c_{coll} \geq \frac{N}{\pi_1 - q_1} c_p \right)$. This is true in most practical scenarios since the penalties for colliding with the PU are typically modelled to be very severe, when compared to the instantaneous benefits at the CRs from acquiring a spectrum. On the other hand, the $i^{th}$ CR's utility in the case of traditional auctions is always greater than the proposed OAUSA auction. This deviation is bounded by $(1 - q_0) t_i$, since $\mathbb{E}_{\mathbf{t}_{-i}} \left[ \psi_i \right] \leq 1$.
\end{bluetext}


\begin{magentatext}
\subsection{Throughput Analysis \label{sec: Throughput}}
In this subsection, we investigate the performance of our proposed auction mechanism, OAUSA, from a throughput perspective. We assume that the CRs' valuations are modeled in terms of their respective throughput, as defined in Equation \eqref{Eqn: Valuation - Throughput} when the moderator allocates the spectrum. We denote the throughput at the $i^{th}$ CR as $x_i = \log (1 + SNR_i)$. Therefore, Equation \eqref{Eqn: Valuation - Throughput} can be rewritten as $t_i = c_i x_i$. Furthermore, we assume that $c_i = c$ with $c>0$, for all $i = 1, \cdots, N$. 

Without any loss of generality, let $j \triangleq \argmax \{ w_i \}$ denote the index of the CR that is selected for spectrum allocation by our proposed OAUSA algorithm. Similarly, let $k \triangleq \argmax \{ x_i \}$ denote the index of the CR that gives maximum throughput to the network, when selected. Since $c_i = c$ for all $i = 1, \cdots, N$, we also have $k = \argmax \{ t_i = c x_i \}$, and therefore, VCG auction provides maximum network throughput. Furthermore, we denote the network throughput due to OAUSA and VCG algorithms as $x_{OAUSA} = x_j$ and $x_{VCG} = x_k$ respectively. In the following lemma, we state that OAUSA achieves the same network throughput as VCG, when the CR valuations are statistically independent and identically distributed. Consequently, OAUSA also achieves the maximum network throughput. 

\begin{lemma}
If all the CRs have independent and identically distributed valuations, then $x_{OAUSA} = x_{VCG} \triangleq \displaystyle \max \mathbf{x}$ if the regularity and the feasibility conditions, stated in Theorems \ref{Thrm: Optimal Auction} and \ref{Thrm: Feasibility} respectively, hold true.
\end{lemma}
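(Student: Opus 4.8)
The plan is to exploit the fact that \emph{identically} distributed valuations force every CR to share a single virtual-valuation function, and that the regularity hypothesis makes this function strictly increasing; selecting the CR with the largest virtual valuation is then the \emph{same} as selecting the one with the largest true valuation, hence the largest throughput. So the whole lemma reduces to an order-preservation argument, plus a check that allocation actually occurs.

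First I would use the i.i.d.\ assumption. Statistical independence together with \emph{identical} marginals means $F_i(\cdot) = F(\cdot)$ and $p_i(\cdot) = p(\cdot)$ for every $i \in \mathbb{N}$. Consequently the per-CR virtual-valuation functions of Theorem~\ref{Thrm: Optimal Auction} collapse to one common function $w_i(t) = w(t) = t - \frac{1 - F(t)}{p(t)}$, identical across all CRs. This is the step that makes the argmaxes comparable in the first place: without identical distributions the $w_i$ would differ and $\argmax_i w_i(t_i)$ could legitimately disagree with $\argmax_i t_i$.

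Next I would invoke the regularity condition of Theorem~\ref{Thrm: Optimal Auction}, which states that $w(\cdot)$ is strictly increasing. Because a strictly increasing map preserves the ordering of its arguments, it preserves their maximizer, so $\argmax_i w(t_i) = \argmax_i t_i$. Hence the OAUSA-selected index $j = \argmax_i w_i(t_i)$ equals $\argmax_i t_i$. Since $t_i = c\,x_i$ with $c > 0$, the map $x_i \mapsto t_i$ is itself strictly increasing, giving $\argmax_i t_i = \argmax_i x_i = k$. Therefore $j = k$, and whenever the spectrum is handed out we get $x_{OAUSA} = x_j = x_k = x_{VCG} = \max\mathbf{x}$.

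The one point requiring care---and the step I expect to be the main obstacle---is reconciling the realization-wise equality asserted in the statement with the fact that the feasibility condition of Theorem~\ref{Thrm: Feasibility} is an inequality on the \emph{expectation} $\mathbb{E}[w_{max}]$. OAUSA places CR $j$ in the winning set $\mathbb{M}(\mathbf{t})$ only when its virtual valuation clears the reserve $\frac{q_1}{q_0} c_{coll}$, so the degenerate outcome $\mbox{\boldmath$\psi$\unboldmath} = \mathbf{0}$ is possible on individual realizations. I would resolve this by restricting attention to the event $\mathbb{M}(\mathbf{t}) \neq \emptyset$ on which allocation actually takes place: there the winning index is $j = k$ by the monotonicity argument above, so both mechanisms allocate to the top-throughput CR. The feasibility hypothesis $\mathbb{E}[w_{max}] \geq \frac{1}{q_0} N c_p + \frac{q_1}{q_0} c_{coll} \geq \frac{q_1}{q_0} c_{coll}$ is precisely what guarantees that the auction is operated in this profitable regime, making the throughput comparison meaningful and closing the argument.
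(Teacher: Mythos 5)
Your proof is correct and takes essentially the same route as the paper's: identical marginals collapse the per-CR virtual valuations to a single function $w(\cdot)$, regularity makes $w$ order-preserving, and $t_i = c\,x_i$ with $c>0$ then gives $\argmax \mathbf{w} = \argmax \mathbf{t} = \argmax \mathbf{x}$. Your closing discussion of the reserve $\frac{q_1}{q_0}c_{coll}$ and the event $\mathbb{M}(\mathbf{t}) \neq \emptyset$ is actually more careful than the paper, whose own proof never invokes the feasibility hypothesis and silently ignores realizations with no allocation.
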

\begin{proof}
Consider two valuations $t_1 = c x_1$ and $t_2 = c x_2$ sampled independently from the same distribution. Let $t_1 \leq t_2$. Since the regularity function holds, let $w(\cdot)$ be the virtual valuation function. We denote the two virtual valuations as $w_1 = w(t_1)$ and $w_2 = w(t_2)$. Since $w(t)$ is a non-decreasing function of $t$, $w_1 \leq w_2$. As a result, given a set of valuations $\mathbf{t}$ and their corresponding virtual valuations $\mathbf{w}$, $\argmax \mathbf{w} = \argmax \mathbf{t} = \argmax \mathbf{x}$.
\end{proof}

In the case where the CR valuations are not identically distributed, we compare the average throughput attained by OAUSA to that of VCG auction in the following theorem.
\begin{thrm}
Let $j \triangleq \argmax \{ w_i \}$ and $k \triangleq \argmax \{ t_i = c x_i \}$ denote the indices of CRs selected for allocation in OAUSA and VCG auction mechanisms respectively. Then, if the regularity and the feasibility conditions, stated in Theorems \ref{Thrm: Optimal Auction} and \ref{Thrm: Feasibility} respectively, hold true, we have
\begin{equation}
	\displaystyle x_k - \frac{1}{c} \left[ \frac{1 - F(x_k)}{p(x_k)} \right] \leq x_j \leq x_k.
	\label{Eqn: Throughput-Bound-Final}
\end{equation}
whenever $c_i = c$ for all $i = 1, \cdots, N$.
\end{thrm}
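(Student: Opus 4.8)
The plan is to split (\ref{Eqn: Throughput-Bound-Final}) into its two halves, since the upper bound is immediate and all the content sits in the lower bound. For the upper bound, recall that $k \triangleq \argmax\{t_i\}$ and that, because the common scaling $c>0$ is shared by every CR, maximizing $t_i = c x_i$ is the same as maximizing the throughput $x_i$; hence $x_k \geq x_i$ for all $i$, and in particular $x_k \geq x_j$. This gives $x_j \leq x_k$ with no further argument.

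For the lower bound I would invoke the defining property of the OAUSA selection rule from Theorem \ref{Thrm: Optimal Auction}: the allocated index $j$ maximizes the virtual valuation $w_i(t_i) = t_i - \frac{1 - F_i(t_i)}{p_i(t_i)}$, so in particular $w_j(t_j) \geq w_k(t_k)$. Writing this inequality out gives $t_j - \frac{1-F(t_j)}{p(t_j)} \geq t_k - \frac{1-F(t_k)}{p(t_k)}$. The key observation is that the quantity subtracted at the index $j$ is non-negative, because $1 - F(t_j) \geq 0$ and $p(t_j) \geq 0$; discarding it from the left-hand side can only weaken the inequality and yields $t_j \geq t_k - \frac{1-F(t_k)}{p(t_k)}$.

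The final step is to translate this back into throughput. Substituting $t_j = c x_j$ and $t_k = c x_k$ and dividing through by $c>0$ produces $x_j \geq x_k - \frac{1}{c}\left[\frac{1-F(t_k)}{p(t_k)}\right]$, which is exactly the left-hand side of (\ref{Eqn: Throughput-Bound-Final}) once the distribution is read at the maximizing point $x_k = t_k/c$. Combining this with the trivial upper bound closes the proof.

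The step I expect to require the most care is not an estimate but the change of variables: the factor $\frac{1}{c}$ appears precisely because we rescale the valuation inequality into a throughput inequality, and one must stay consistent about whether $F$ and $p$ denote the law of the valuation $t_i$ or of the throughput $x_i = t_i/c$ (the two differ by the Jacobian, $p_{t}(t)=\tfrac1c\,p_{x}(t/c)$). I would fix this convention at the outset — taking $F,p$ to be the valuation laws, so that $w_i$ is exactly the virtual valuation of Theorem \ref{Thrm: Optimal Auction} — and carry it through; a misplaced factor of $c$ is the only thing that can throw off agreement with the stated bound. I would also note that the regularity hypothesis is what makes $w_i$ strictly increasing, so that $j$ is unambiguously defined, while the feasibility hypothesis guarantees the selected set is nonempty so that $x_j$ is actually realized.
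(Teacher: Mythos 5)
Your proof is correct and takes essentially the same route as the paper's: both use $x_j \leq x_k$ from the maximality of $k$, $w_k \leq w_j$ from the OAUSA selection rule, and the key step of discarding the nonnegative hazard term at index $j$ (i.e., $w_j \leq t_j$, which is the paper's $\tfrac{1}{c} w_j \leq x_j$) before rescaling by $c > 0$. Your explicit bookkeeping of whether $F$ and $p$ denote the law of the valuation $t_i$ or of the throughput $x_i$ is in fact more careful than the paper, which silently writes $F_i(x_i)$, $p_i(x_i)$ for quantities evaluated at $t_i = c\,x_i$.
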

\begin{proof}
Given that $j \triangleq \argmax \{ w_i \}$ and $k \triangleq \argmax \{ t_i = c x_i \}$, we have
\begin{equation}
\begin{array}{lcr}
	w_k \leq w_j & \quad \mbox{and} \quad & x_j \leq x_k
\end{array}
\label{Eqn: Inequality-1}
\end{equation}

But, we have
\begin{equation}
	x_i \geq \displaystyle x_i - \frac{1}{c} \left[ \frac{1 - F_i(x_i)}{p_i(x_i)} \right] = \frac{1}{c} w_i
	\label{Eqn: Inequality-2}
\end{equation}

Combining Equations \eqref{Eqn: Inequality-1} and \eqref{Eqn: Inequality-2}, we have
\begin{equation}
	\displaystyle x_k - \frac{1}{c} \left[ \frac{1 - F_k(x_k)}{p_k(x_k)} \right] \ = \ \frac{1}{c} w_k \ \leq \ \frac{1}{c} w_j \ \leq \ x_j \ \leq \ x_k.
\end{equation}

\end{proof}

In summary, the proposed auction mechanism, OAUSA, guarantees an network throughput within $\displaystyle \frac{1 - F(x_k)}{c \cdot p(x_k)}$ units from the maximum throughput offered by its counterpart, namely VCG auction. 
\end{magentatext}


\section{Truthful Revelation of Sensing Decisions \label{sec: Truthful Sensing}}
Given that the CRs are selfish and rational, there is a motivation for the CRs to deviate from truthful revelation of sensing decisions if they obtain a greater utility. In this subsection, we investigate the conditions under which CRs falsify their sensing-decisions, and propose a novel fusion rule at the moderator that enforces truthful revelation of sensing decisions.


\begin{redtext}
In order to illustrate this rational behavior at the CRs, we consider an example network where the CRs have the same false-alarm and detection probabilities, denoted as $P_f$ and $P_d$ respectively, and the moderator employs a \emph{k-out-of-N} fusion rule. In the following lemma, we show the conditions under which a CR would unilaterally deviate from truthful revelation of its sensing decision for this example.

\begin{lemma}
Let all the CRs operate with identical false-alarm probabilities, denoted as $P_f$. Then, if the moderator employs a $k$-out-of-$N$ fusion rule, the $i^{th}$ CR unilaterally maximizes its utility when it always reveals a '0' to the moderator.
\end{lemma}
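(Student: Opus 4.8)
The plan is to reduce the CR's multi-state payoff to a single monotone dependence on the probability of a correct ``spectrum-available'' inference, and then exploit the combinatorial structure of the $k$-out-of-$N$ rule. First I would start from the $i^{th}$ CR's expected utility in Equation~\eqref{Eqn: CR Expected Utility}, $U_i = \mathbb{E}_{\mathbf{t}_{-i}}\left[ q_0\psi_i t_i - b_i - c_p \right]$, and observe that the only quantity carrying the sensing outcome is $q_0 = \pi_0(1-Q_f)$, the joint probability that the FC declares $H_0$ while the PU is truly absent. Crucially, neither the collision probability $q_1$ nor the collision cost $c_{coll}$ appears in $U_i$ (they are charged to the moderator, per Table~\ref{Table: Auction}), and the payment $b_i$ together with $c_p$ is set by the moderator under the presumption of truthful sensing, hence is invariant to a \emph{unilateral} deviation by CR $i$ in its sensing bit. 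Therefore, holding the revealed valuations fixed, $U_i$ is an affine and non-decreasing function of $q_0$ with slope $\mathbb{E}_{\mathbf{t}_{-i}}[\psi_i t_i]\ge 0$, so maximizing $U_i$ reduces to maximizing $q_0$ over CR $i$'s local reporting rule.

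Next I would make the fusion structure explicit. Writing $S_{-i} = \sum_{j\ne i} u_j$ for the number of ``$1$'' votes cast by the other CRs, the $k$-out-of-$N$ rule declares $H_1$ (no allocation) precisely when $u_i + S_{-i}\ge k$, so $q_0 = \pi_0\, P\!\left(u_i + S_{-i} \le k-1 \mid H_0\right)$. Since all CRs share the same $P_f$, the conditional law of $S_{-i}$ is a fixed distribution (Binomial under truthful others) that is independent of CR $i$'s chosen action. The event driving $q_0$ is $\{S_{-i}\le k-1\}$ when $u_i=0$, but only the strictly smaller set $\{S_{-i}\le k-2\}$ when $u_i=1$; the extra pivotal outcome $\{S_{-i}=k-1\}$ has positive probability. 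Hence $u_i=0$ pointwise dominates $u_i=1$ for this event, so the deterministic rule ``always report $0$'' maximizes $q_0$, and by the first step it maximizes $U_i$. Because this set-inclusion dominance holds conditionally on every realization of $S_{-i}$, it remains a best response irrespective of the other CRs' strategies, which is exactly the claimed \emph{unilateral} optimality.

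The step I expect to be the crux is the reduction in the first paragraph, namely cleanly isolating the sensing outcome so that $U_i$ depends on the report \emph{only} through $q_0$. The delicate point is that reporting $0$ simultaneously inflates the misdetection-driven collision probability $q_1 = \pi_1(1-Q_d)$, and one must argue that this does not feed back into $U_i$: because the moderator absorbs $c_{coll}$ and fixes $b_i$ and $c_p$ assuming honest sensing, the CR's payment is not rebated on a collision, so the collision term is never internalized by the CR and $q_1$ drops out of $U_i$. I would verify this carefully against the payoff entries of Table~\ref{Table: Auction}, since if the CR's payment were instead conditioned on a collision-free allocation, the pivotal term would acquire a negative contribution and the monotonicity of $U_i$ in $q_0$ could fail. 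Once that independence is secured, the remainder is the elementary set-inclusion argument above.
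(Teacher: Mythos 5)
Your proof is correct and follows the same overall strategy as the paper's — reduce the CR's expected utility to a monotone function of $q_0$, then show the $k$-out-of-$N$ structure makes ``always report 0'' the minimizer of the global false-alarm probability — but the two middle steps are executed differently. First, the paper does not hold $b_i$ fixed as you do: it substitutes the optimal payment rule of Theorem~\ref{Thrm: Optimal Auction} into Equation~\eqref{Eqn: CR Expected Utility} to obtain $\tilde U_i(t_i) = \tilde q_0\,\mathbb{E}_{\mathbf{t}_{-i}}\bigl[\int_{a_i}^{t_i}\psi_i(s_i,\mathbf{t}_{-i})\,ds_i\bigr] = \pi_0(1-\tilde Q_f)\,\mathbb{E}_{\mathbf{t}_{-i}}\bigl[\int_{a_i}^{t_i}\psi_i\,ds_i\bigr]$, so in the paper's reading the payment itself scales with $\tilde q_0$ and the slope of $U_i$ in $q_0$ is $\mathbb{E}_{\mathbf{t}_{-i}}\bigl[\int_{a_i}^{t_i}\psi_i\,ds_i\bigr]$ rather than your $\mathbb{E}_{\mathbf{t}_{-i}}[\psi_i t_i]$. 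You correctly identified this as the delicate point; since both slopes are nonnegative, the monotonicity in $q_0$ — and hence the lemma — survives under either reading of how $b_i$ responds to the deviation. Second, the paper parametrizes the deviation by a stochastic flipping model with $\alpha_1 = Pr(\tilde u_i = 1\,|\,u_i = 0)$ and $\alpha_2 = Pr(\tilde u_i = 0\,|\,u_i = 1)$, computes $\tilde Q_F$ in closed form, and simplifies it to an affine increasing function of $\tilde P_f$ whose coefficient $\binom{N-1}{k-1}P_f^{k-1}(1-P_f)^{N-k}$ is exactly the pivotal probability $P(S_{-i} = k-1\,|\,H_0)$ that your set-inclusion argument isolates; minimizing over $\tilde P_f \in [0,1]$ then forces $\tilde P_f = 0$. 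Your conditional-dominance formulation buys two things ``for free'': pointwise dominance in $S_{-i}$ immediately covers all randomized reporting rules (the paper's flip model included) and, as you note, holds irrespective of the other CRs' strategies, whereas the paper's calculation assumes the other $N-1$ CRs report truthfully with common $P_f$. What the paper's more explicit route buys in exchange is the closed-form expression for $\tilde Q_F$ in Equation~\eqref{Eqn: Qf - falsified}, which is reused later to bound the utility loss $U_0 - \tilde U_0$ of the strategy-proof fusion rule — something your qualitative dominance argument would not supply.
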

\begin{proof}
Assuming that the $i^{th}$ CR deviates unilaterally in revealing the sensing decisions to the moderator, we denote the sensing decision revealed by the $i^{th}$ CR as $\tilde{u}_i$. We assume that $\tilde{u}_i$ is constructed from $u_i$ via a stochastic flipping model, where $\alpha_1 = Pr(\tilde{u}_i = 1|u_i = 0)$ and $\alpha_2 = Pr(\tilde{u}_i = 0|u_i = 1)$. Therefore, the global false alarm probability $\tilde{Q}_F$ due to unilateral deviation of the $i^{th}$ CR is given as follows:
\begin{equation}
	\begin{array}{lcl}
		\tilde{Q}_F & = & \displaystyle \tilde{P}_f \sum_{j = k-1}^{N-1} \nchoosek{N-1}{j} P_f^j \left( 1-P_f \right)^{N-1-j}
		\\[2ex]
		&& \quad + \displaystyle (1 - \tilde{P}_f) \sum_{j = k}^{N-1} \nchoosek{N-1}{j} P_f^j \left( 1-P_f \right)^{N-1-j},
	\end{array}
\end{equation}
where $\tilde{P}_f = \alpha_1 + (1 - \alpha_1 - \alpha_2)P_F$ is the conditional probability of $\tilde{u}_i = 1$ under $H_0$. 

On simplification, we get
\begin{equation}
	\begin{array}{lcl}
		\tilde{Q}_F & = & \displaystyle \tilde{P}_f \nchoosek{N-1}{k-1} P_f^{k-1} \left( 1-P_f \right)^{N-k}
		\\[2ex]
		&& \quad + \displaystyle \sum_{j = k}^{N-1} \nchoosek{N-1}{j} P_f^j \left( 1-P_f \right)^{N-1-j}.
	\end{array}
	\label{Eqn: Qf - falsified}
\end{equation}
Note that, when $\alpha_1 = 1$ and $\alpha_2 = 0$, we have $\tilde{P}_f = 1$, which results in an upper bound on $\tilde{Q}_F$. On the other hand, when $\alpha_1 = 0$ and $\alpha_2 = 1$, we have $\tilde{P}_f = 0$. This results in a lower bound on $\tilde{Q}_F$. 

In our proposed auction, the $i^{th}$ CR's utility increases if $\tilde{Q}_f$ decreases, as shown in the following equation.
\begin{equation}
	\begin{array}{lcl}
		\tilde{U}_i(t_i) & = & \displaystyle \mathbb{E}_{\mathbf{t}_{-i}} \left[ \tilde{q}_0 \psi_i t_i - b_i - c_p \right]
		\\[2ex]
		& = & \displaystyle \tilde{q}_0 \mathbb{E}_{\mathbf{t}_{-i}} \left[ \int_{a_j}^{t_j} \psi_j(s_j, t_{-j}) d_{s_j} \right]
		\\[2ex]
		& = & \displaystyle \pi_0(1 - \tilde{Q}_f) \mathbb{E}_{\mathbf{t}_{-i}} \left[ \int_{a_j}^{t_j} \psi_j(s_j, t_{-j}) d_{s_j} \right]
	\end{array}
\end{equation}

Therefore, the $i^{th}$ CR always transmits a '0' by choosing $\alpha_1 = 0$ and $\alpha_2 = 1$ in order to attain the minimum value of $\tilde{Q}_f$ and maximize its own utility.
\end{proof}
\end{redtext}

%

Since the $i^{th}$ CR's utility can always be improved by unilaterally transmitting a '0' instead of its true sensing decision, we focus our attention on the design of a strategy-proof fusion rule that discourages the CRs from this unilateral deviation.

\begin{thrm}
	A fusion rule is strategy-proof (truthful) if the moderator disregards sensing decisions from all the CRs with $w^* = \max \mathbf{w}$ in the fusion process.
	\label{Thrm: Strategy-Proof Fusion}
\end{thrm}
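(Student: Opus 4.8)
The plan is to show that the proposed rule removes, for every CR, either the \emph{incentive} or the \emph{ability} to misreport its sensing decision. The starting point is the fact established just before the theorem: a CR's sensing report enters its payoff only through the compound probability $q_0 = \pi_0(1 - Q_f)$ (equivalently through $\tilde{Q}_f$), and every CR strictly prefers to inflate $q_0$ by transmitting a $'0'$. Hence falsification is profitable for CR $i$ precisely when both (a) its utility is a nonconstant function of $q_0$, and (b) its reported decision $u_i$ can actually move $Q_f$ and $Q_d$. I would organize the argument around these two levers, partitioning the CRs according to the allocation set $\mathbb{M}(\mathbf{t})$ produced by the optimal auction of Theorem~\ref{Thrm: Optimal Auction}.

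First I would dispose of the losers, $i \notin \mathbb{M}(\mathbf{t})$. For such a CR, Theorem~\ref{Thrm: Optimal Auction} gives $\psi_i^* = 0$ and the payment rule yields $b_i = -c_p$; substituting into the utility definition of Equation~\eqref{Eqn: CR Expected Utility} produces $U_i = \mathbb{E}_{\mathbf{t}_{-i}}[\, q_0 \cdot 0 \cdot t_i - (-c_p) - c_p \,] = 0$, identically in $q_0$. A losing CR's payoff is therefore invariant to $Q_f$, so lever (a) fails and it has no incentive to deviate, irrespective of the fusion rule.

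The crux is the winners, $i \in \mathbb{M}(\mathbf{t})$, i.e.\ those attaining $w_i(t_i) = w^* = \max \mathbf{w}$. Their utility does increase with $q_0$, so lever (a) is present, and without intervention they would under-report (transmit $'0'$) exactly as in the preceding lemma. Here the proposed rule intervenes on lever (b): by construction it discards the sensing decisions of precisely the CRs achieving $w^*$, so $Q_f$ and $Q_d$ are now computed from an $(N - |\mathbb{M}(\mathbf{t})|)$-input fusion that contains no $u_i$ of any winner. Consequently $q_0$ and $q_1$ carry no functional dependence on a winner's reported decision, its utility is constant in $u_i$, and no unilateral misreport is profitable. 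I would make this explicit by re-deriving $Q_f, Q_d$ with the winning inputs excised and verifying the discarded bits appear nowhere.

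The main obstacle to guard against is showing that a CR cannot escape the ``discard'' set by manipulation and thereby regain influence over the fusion. The key point I would emphasize is that membership in $\mathbb{M}(\mathbf{t})$ is determined solely by the virtual valuations $\mathbf{w}(\mathbf{t})$ and not by the sensing reports $\mathbf{u}$, so no choice of $u_i$ can alter whether CR $i$'s own decision is discarded; this decouples the sensing dimension from the allocation dimension. Combining the two cases, every CR is either indifferent to, or unable to affect, the single quantity $q_0$ through which sensing manipulation would ever pay off, so truthful sensing revelation is a weakly dominant strategy and the rule is strategy-proof. I would close by noting that this complements the valuation-side incentive compatibility embedded in Theorem~\ref{Thrm: Optimal Auction}, delivering the promised two-dimensional truthfulness.
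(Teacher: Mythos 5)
Your proposal is correct and follows essentially the same two-case argument as the paper's own proof: losers ($i \notin \mathbb{M}(\mathbf{t})$) receive zero utility identically in $q_0$ and hence have no incentive to misreport, while winners' sensing bits are excluded from the fusion and hence have no ability to influence $q_0$, with the multiple-winner case handled by discarding all of $\mathbb{M}(\mathbf{t})$. Your explicit observation that the discard set is determined by the virtual valuations $\mathbf{w}(\mathbf{t})$ alone, and so cannot be escaped by manipulating $u_i$, is left implicit in the paper but is a worthwhile clarification rather than a different route.
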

\begin{proof}
	Let the $i^{th}$ CR have the maximum virtual valuation, i.e., $w_i = \max \mathbf{w}$. We assume that the moderator ignores the $i^{th}$ CR's sensing decision in its fusion rule. Then, for any CR-index $j(\neq i)$, we have $w_j < w_i$. Consequently, the moderator does not allocate the spectrum to the $j^{th}$ CR. Therefore, since $\psi_j(s_j, \mathbf{t}_{-j}) = 0$ for all $s_j \in [a_j, t_j]$ irrespective of whether the $j^{th}$ CR tries to increase $q_0$, the utility of the $j^{th}$ CR is given by
	\begin{equation}
		\begin{array}{lcl}
			\tilde{U}^{(m)}_j(t_j) & = & \displaystyle \mathbb{E}_{\mathbf{t}_{-j}} \left[ q_0 \psi_j t_j - b_j - c_p \right]
			\\
			\\
			& = & \displaystyle q_0 \mathbb{E}_{\mathbf{t}_{-j}} \left[ \int_{a_j}^{t_j} \psi_j(s_j, t_{-j}) d_{s_j} \right]
			\\
			\\
			& = & 0.
		\end{array}
	\end{equation}
	
	Now, consider the case where $j = i$. Then, since the $i^{th}$ CR's sensing decision is not considered by the moderator in its fusion rule, it cannot influence $q_0$. Therefore, no CR in the network can increase its utility by employing a miss strategy. 
	
	Furthermore, consider the event where there are multiple winners in our proposed auction mechanism. 
	If such an event takes place, based on the above mentioned arguments, it can be shown in a straightforward manner that a fusion rule is truthful if the moderator disregards all the winning CRs' sensing decisions.
\end{proof}

\begin{redtext}
Given that the sensing performance changes with the removal of CR decisions from the fusion rule, we present some analysis on the change in the moderator's utility in the following lemma.

\begin{thrm}
Let $\tilde{U}_0$ denote the moderator's utility when the sensing decisions from the CR with $w^* = \max \mathbf{w}$ are removed from the $k$-out-of-$N$ fusion rule. If $U_0$ is the moderator's utility when all the CRs' sensing decisions are accounted for, we have
\begin{equation}
	\begin{array}{lcl}
		|U_0 - \tilde{U}_0|  & \leq & \left| \pi_0 \mathbb{E}(w_{max}) P_f^N - \pi_1 c_{coll} P_d^N \right|
		\\[2ex]
		&& + \displaystyle \sum_{j = k}^{N-1} \left[ \pi_1 c_{coll} \nchoosek{N-1}{j} P_d^j (1-P_d)^{N-1-j} \right.
		\\[2ex]
		&& \displaystyle \left. + \pi_0 \mathbb{E}(w_{max}) \nchoosek{N-1}{j-1} P_f^j (1-P_f)^{N-1-j} \right].
	\end{array}
	\label{Eqn: U_0 - diff - absolute}
\end{equation} 

Furthermore, as $N \rightarrow \infty$, we have
\begin{equation}
	\displaystyle \lim_{N \rightarrow \infty} (U_0 - \tilde{U}_0) = 0.
\end{equation} 
\end{thrm}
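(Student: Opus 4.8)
The plan is to reduce everything to the revenue formula $U_0 = T$ established in Theorem~\ref{Thrm: Optimal Auction} (Equation~\eqref{Eqn: Moderator-Utility-3}), namely $U_0 = q_0\,\mathbb{E}[w_{max}] - q_1 c_{coll} - N c_p$, and to exploit that the \emph{only} quantities changing when the winning CR's bit is dropped from the fusion rule are the global operating points $Q_f$ and $Q_d$. Writing $q_0 = \pi_0(1-Q_f)$, $q_1 = \pi_1(1-Q_d)$, and letting $\tilde Q_f,\tilde Q_d$ be the false-alarm and detection probabilities of the reduced rule, I would substitute these into $U_0$ and $\tilde U_0$ and subtract. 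Since $\mathbb{E}[w_{max}]$, $c_{coll}$ and the compensation term $Nc_p$ are identical in both (the allocation and the participant set are untouched by the fusion modification), the $Nc_p$ cancels and I get the clean identity $U_0 - \tilde U_0 = \pi_0\,\mathbb{E}[w_{max}]\,(\tilde Q_f - Q_f) - \pi_1 c_{coll}\,(\tilde Q_d - Q_d)$.

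Next I would compute the two increments for the $k$-out-of-$N$ rule. The key observation is that disregarding the winning CR's decision is exactly the $\tilde P_f=0$ (equivalently $\tilde P_d=0$) specialization of Equation~\eqref{Eqn: Qf - falsified}, so $\tilde Q_f = \sum_{j=k}^{N-1}\binom{N-1}{j}P_f^{j}(1-P_f)^{N-1-j}$, and analogously for $\tilde Q_d$ with $P_d$. Conditioning the full $N$-sensor tail $Q_f = \sum_{j=k}^{N}\binom{N}{j}P_f^{j}(1-P_f)^{N-j}$ on the removed bit (equivalently, Pascal's rule $\binom{N}{j}=\binom{N-1}{j}+\binom{N-1}{j-1}$ followed by reindexing) collapses the difference to the single term $Q_f-\tilde Q_f = \binom{N-1}{k-1}P_f^{k}(1-P_f)^{N-k}$, and identically $Q_d-\tilde Q_d = \binom{N-1}{k-1}P_d^{k}(1-P_d)^{N-k}$. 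To land on the stated form of Equation~\eqref{Eqn: U_0 - diff - absolute} I would instead keep the two tail sums unsimplified, peel off the unanimous $j=N$ contribution (which supplies the boundary term $|\pi_0\mathbb{E}[w_{max}]P_f^{N}-\pi_1 c_{coll}P_d^{N}|$), and bound the overlapping $j=k,\dots,N-1$ terms term-by-term by the triangle inequality, using $(1-P)^{N-j}\le(1-P)^{N-1-j}$ to align the exponents; this produces the displayed summation.

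For the limit I would work from the tight single-term form $U_0-\tilde U_0 = \pi_1 c_{coll}\binom{N-1}{k-1}P_d^{k}(1-P_d)^{N-k} - \pi_0\mathbb{E}[w_{max}]\binom{N-1}{k-1}P_f^{k}(1-P_f)^{N-k}$. Each factor $\binom{N-1}{k-1}P^{k}(1-P)^{N-k}$ is, up to the factor $P$, a binomial pmf evaluated at $k-1$; since the error-minimizing threshold obeys $k/N\to\tau^{\ast}$ with $P_f<\tau^{\ast}<P_d$, the first factor lies in the lower tail of a $\mathrm{Bin}(N-1,P_d)$ and the second in the upper tail of a $\mathrm{Bin}(N-1,P_f)$, so both decay exponentially in $N$ by a Chernoff/Stirling estimate. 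As $c_{coll}$ is constant and $\mathbb{E}[w_{max}]$ grows at most polynomially (indeed bounded whenever the valuations $t_i\in[a_i,z_i]$ are bounded), the exponential decay dominates and $\lim_{N\to\infty}(U_0-\tilde U_0)=0$.

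I expect the main obstacle to be the bookkeeping in the middle step: reproducing the paper's looser summation requires isolating the $j=N$ boundary term and controlling the $(1-P)$ exponents, even though the honest difference is a single binomial term. A secondary subtlety is the limit, where one must verify that $\mathbb{E}[w_{max}]$ cannot grow fast enough to defeat the exponential decay, and that the error-optimal $k$ keeps $k/N$ strictly interior to $(P_f,P_d)$ so that \emph{both} binomial factors are genuinely in their respective tails.
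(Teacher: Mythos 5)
Your proposal is correct, and for the displayed inequality it travels essentially the paper's own road: the same starting identity $U_0-\tilde U_0=(q_0-\tilde q_0)\,\mathbb{E}[w_{max}]-(q_1-\tilde q_1)\,c_{coll}$ with the $Nc_p$ terms cancelling, the same identification of the reduced rule as the $\tilde P_f=0$ (resp.\ $\tilde P_d=0$) specialization of the falsified-decision formula, so $\tilde Q_f=\sum_{j=k}^{N-1}\binom{N-1}{j}P_f^j(1-P_f)^{N-1-j}$, and then Pascal's rule plus $(1-P)^{N-j}\le(1-P)^{N-1-j}$ to peel off the $j=N$ term and bound the overlap. One bookkeeping caution you already flag yourself: the stated bound is asymmetric --- $\binom{N-1}{j}$ on the $P_d$ terms but $\binom{N-1}{j-1}$ on the $P_f$ terms --- because the paper upper-bounds $Q_F-\tilde Q_F$ via the Pascal split while lower-bounding $Q_D-\tilde Q_D$ by replacing $\binom{N}{j}$ with $\binom{N-1}{j}$; a fully symmetric term-by-term triangle inequality yields a slightly different (equally valid) bound, so you must mimic that asymmetry to land on the exact display. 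Where you genuinely depart from the paper, you improve on it. First, your exact collapse $Q_f-\tilde Q_f=\binom{N-1}{k-1}P_f^{k}(1-P_f)^{N-k}$ (condition the $N$-sensor tail on the removed bit) is correct and strictly tighter than anything the paper writes. Second, for the limit the paper merely cites asymptotic consistency of $k$-out-of-$N$ fusion ($Q_F\to0$, $Q_D\to1$) from the Varshney book, asserting it ``for any $1\le k\le n$,'' which is false as stated: for fixed $k$ one has $Q_F\to1$; the limits require $k$ to scale, e.g.\ $k_{opt}/N\to\tau^{\ast}\in(P_f,P_d)$ as in Equation~\eqref{Eqn: Optimal k}. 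Your Chernoff/Stirling estimate on the exact single binomial pmf term makes precisely this hypothesis explicit and gives an exponential rate, and your boundedness check on $\mathbb{E}[w_{max}]$ (valuations lie in $[a_i,z_i]$) is a point the paper silently assumes; so your treatment of the limit is both more self-contained and more careful. Finally, your sign convention $q_0-\tilde q_0=\pi_0(\tilde Q_F-Q_F)$ is the correct one given $q_0=\pi_0(1-Q_f)$ --- the corresponding line in the paper has the signs flipped, harmlessly for the absolute-value bound.
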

\begin{proof}
Let $Q_F$ and $Q_D$ denote the global probabilities of false-alarm and detection at the moderator, when all the CRs' sensing decisions are considered in the $k$-out-of-$N$ fusion rule. We assume that the $i^{th}$ CR\footnote{This assumption that there is only one CR with $w^* = \max \mathbf{w}$ holds true whenever valuations are sampled from an uncountable set of real numbers since $Pr(w_i = w_j) = 0$, for any $i \neq j$} has $w^* = \max \mathbf{w}$. In other words, the proposed strategy-proof fusion rule, as proposed in Theorem \ref{Thrm: Strategy-Proof Fusion}, necessitates the removal of the $i^{th}$ CR's sensing decision. In such a case, we assume that the moderator's utility is $\tilde{U}_0$. Both $U_0$ and $\tilde{U}_0$ can be found from Equation \eqref{Eqn: U0 - final} as follows:
\begin{equation}
	\begin{array}{lcl}
		U_0 & = & q_0 \mathbb{E}(w_{max}) - q_1 c_{coll} - N c_p
		\\[2ex]
		\tilde{U}_0 & = & \tilde{q}_0 \mathbb{E}(w_{max}) - \tilde{q}_1 c_{coll} - N c_p
	\end{array}
	\label{Eqn: U_0 - Both fusion rules}
\end{equation}

Therefore, the absolute change in the moderator's utility due to the removal of the $i^{th}$ CR's sensing decision is given by
\begin{equation}
	\begin{array}{lcl}
		U_0 - \tilde{U}_0  & = & (q_0 - \tilde{q}_0) \mathbb{E}(w_{max}) - (q_1 - \tilde{q}_1) c_{coll},
	\end{array}
	\label{Eqn: U_0 - diff}
\end{equation}
where $q_0 - \tilde{q}_0 = \pi_0 ( Q_F - \tilde{Q}_F )$ and $q_1 - \tilde{q}_1 = \pi_1 ( Q_D - \tilde{Q}_D )$.

From Equation \eqref{Eqn: Qf - falsified}, we have
\begin{equation}
	\begin{array}{lcl}
		Q_F - \tilde{Q}_F & = & \displaystyle \sum_{j = k}^N \nchoosek{N}{j} P_f^j (1 - P_f)^{N-j}
		\\
		&& \displaystyle - \sum_{j = k}^{N-1} \nchoosek{N-1}{j} P_f^j (1 - P_f)^{N-1-j}
		\\[3ex]
		& \leq & \displaystyle P_f^N + \sum_{j = k}^{N-1} \nchoosek{N-1}{j - 1} P_f^j (1 - P_f)^{N-1-j}.
	\end{array}
	\label{Eqn: Qf - diff - falsified}
\end{equation}
since $\displaystyle \nchoosek{N}{j} - \nchoosek{N-1}{j} = \nchoosek{N-1}{j-1}$ and $(1 - P_f) \leq 1$.

Using a similar approach, we can calculate a lower bound on $Q_D - \tilde{Q}_D$ as follows:
\begin{equation}
	\begin{array}{lcl}
		Q_D - \tilde{Q}_D & = & \displaystyle \sum_{j = k}^N \nchoosek{N}{j} P_d^j (1 - P_d)^{N-j}
		\\
		&& \displaystyle - \sum_{j = k}^{N-1} \nchoosek{N-1}{j} P_d^j (1 - P_d)^{N-1-j}.
	\end{array}
	\label{Eqn: Qd - diff - falsified}
\end{equation}

Replacing the term $\nchoosek{N}{j}$ with a smaller term $\nchoosek{N-1}{j}$, we have
\begin{equation}
	\begin{array}{lcl}
		Q_D - \tilde{Q}_D & \geq & \displaystyle P_d^N - \sum_{j = k}^{N-1} \nchoosek{N-1}{j} P_d^j (1 - P_d)^{N-1-j}.
	\end{array}
	\label{Eqn: Qd - diff - falsified}
\end{equation}

Substituting Equations \eqref{Eqn: Qf - diff - falsified} and \eqref{Eqn: Qd - diff - falsified} in Equation \eqref{Eqn: U_0 - diff}, we have
\begin{equation}
	\begin{array}{lcl}
		U_0 - \tilde{U}_0  & \leq & \left( \pi_0 \mathbb{E}(w_{max}) P_f^N - \pi_1 c_{coll} P_d^N \right)
		\\[2ex]
		&& + \displaystyle \sum_{j = k}^{N-1} \left[ \pi_1 c_{coll} \nchoosek{N-1}{j} P_d^j (1-P_d)^{N-1-j} \right.
		\\[2ex]
		&& \displaystyle \left. + \pi_0 \mathbb{E}(w_{max}) \nchoosek{N-1}{j-1} P_f^j (1-P_f)^{N-1-j} \right]
	\end{array}
	\label{Eqn: U_0 - diff - final}
\end{equation}

Since $|a \pm b| \leq |a| + |b|$ for any real values of $a$ and $b$, we get Equation \eqref{Eqn: U_0 - diff - absolute} from Equation \eqref{Eqn: U_0 - diff - final}.

Next, we investigate the asymptotic behavior of $U_0 - \tilde{U}_0$ as $N$ tends to infinity. It is well-known \cite{Book-Varshney} that, for any positive integral values of $n$ and $k$ such that $1 \leq k \leq n$, we have
\begin{equation}
	\begin{array}{lclclcl}
		\displaystyle \lim_{N \rightarrow \infty} Q_F & = & 0 & \quad & \displaystyle \lim_{N \rightarrow \infty} \tilde{Q}_F & = & 0
		\\
		\displaystyle \lim_{N \rightarrow \infty} Q_D & = & 1 & \quad & \displaystyle \lim_{N \rightarrow \infty} \tilde{Q}_D & = & 1
	\end{array}
\end{equation}

Since $\displaystyle \lim_{N \rightarrow \infty} (Q_F - \tilde{Q}_F) = 0$ and $\displaystyle \lim_{N \rightarrow \infty} (Q_D - \tilde{Q}_D) = 0$, we have $\displaystyle \lim_{N \rightarrow \infty} (U_0 - \tilde{U}_0) = 0$.
\end{proof}
\end{redtext}


Note that, since the valuations are independent of sensing decisions, the choice of the fusion rule has no impact on the truthfulness of the auction mechanism in terms of revelation of CR valuations. Therefore, our proposed auction is strategy-proof in multiple dimensions, namely truthful revelation of sensing decisions and valuations.

\section{Simulation Results \label{sec: Simulation}}
\begin{figure}[!t]
	\centering
    \includegraphics[width=3.3in]{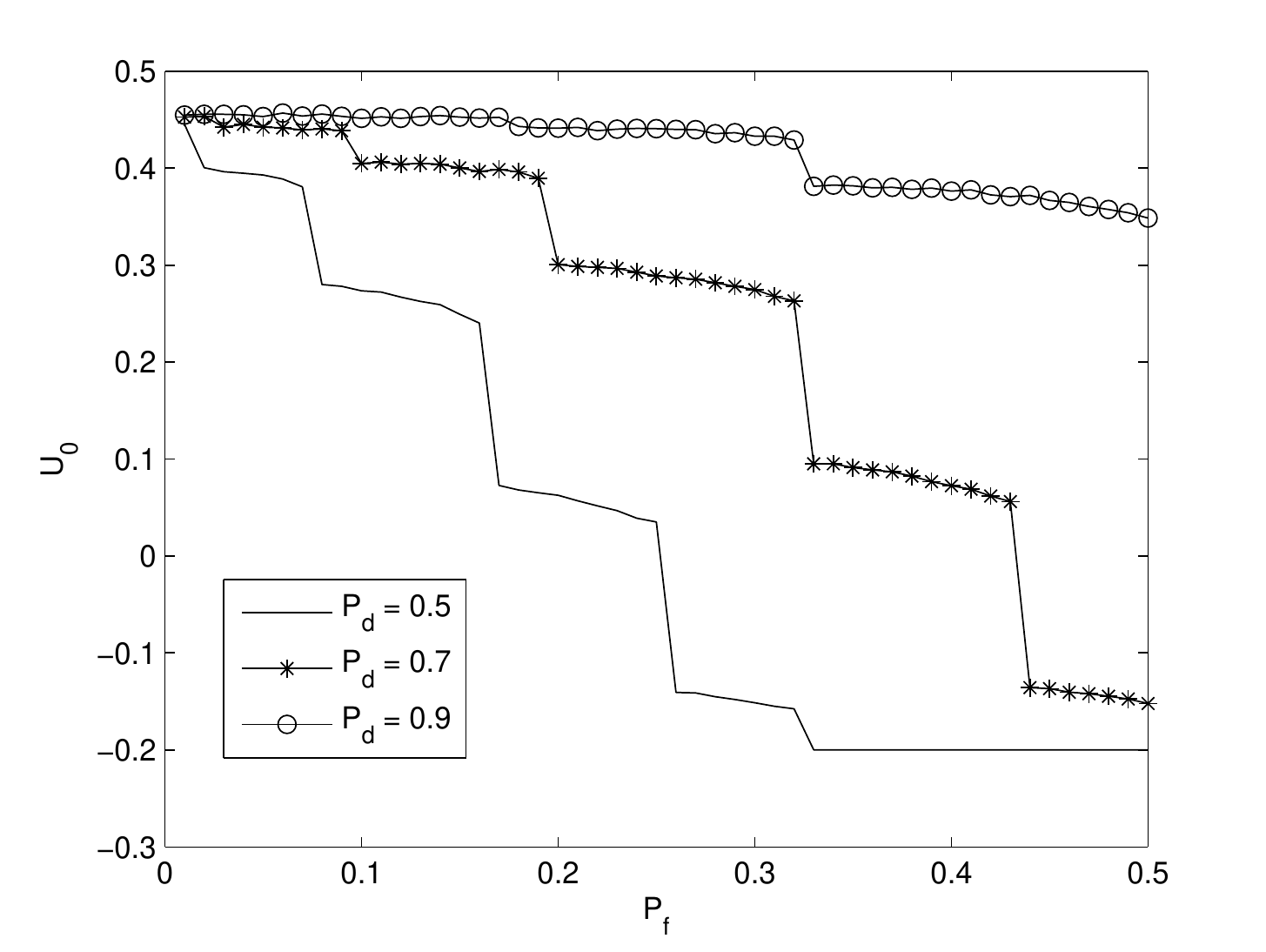}
    \caption{Utility of the moderator vs. local false alarm probability, for a fixed $P_d$.}
    \label{Fig: U_mod_vs_Pf}
\end{figure}

In our simulation results, we consider a CR network with 10 identical nodes, where we assume the prior distribution of the PU spectrum usage as $\pi_0 = 0.8$ (or equivalently, $\pi_1 = 0.2$) \cite{Akyildiz2006}. Also, let the probabilities of false-alarm and detection be denoted as $P_{f_i} = P_f$ and $P_{d_i} = P_d$. We also assume that the moderator employs a \emph{k-out-of-N} fusion rule in order to make a global inference about PU's spectrum availability, where $k$ is chosen to be optimal in the Bayesian sense \cite{Nadendla-Thesis}, as follows.
\begin{equation}
	k_{opt} = \displaystyle \left\lceil \frac{\displaystyle \log \left( \frac{\pi_1}{\pi_0} \right) + N \log\left( \frac{1-P_f}{1-P_d} \right)}{\displaystyle \log\left( \frac{P_d (1 - P_f)}{P_f (1 - P_d)}\right)} \right\rceil.
	\label{Eqn: Optimal k}
\end{equation}

Note that $k_{opt}$ is chosen to minimize the error probability at the moderator. It depends on the number of CRs and their probabilities of false alarm and detection, along with the prior probabilities of the hypotheses. Therefore, if any of these parameters change, $k_{opt}$ changes in steps due to the ceil function ($\lceil \cdot \rceil$).

Then, the global probabilities of false alarm and detection, denoted $Q_f$ and $Q_d$ respectively, can be found as follows.
\begin{subequations}
	\begin{gather}
		\label{eqn3a}
		Q_f = \sum_{i = k_{opt}}^{N} \nchoosek{N}{i} P_f^i \left( 1-P_f \right)^{N-i},
        \\
        \label{eqn3b}
        Q_d = \sum_{i = k_{opt}}^{N} \nchoosek{N}{i} P_d^i \left( 1-P_d \right)^{N-i}.
	\end{gather}
	\label{Eqn: CR sensing performance}
\end{subequations}

\begin{figure}[!t]
	\centering
    \includegraphics[width=3.3in]{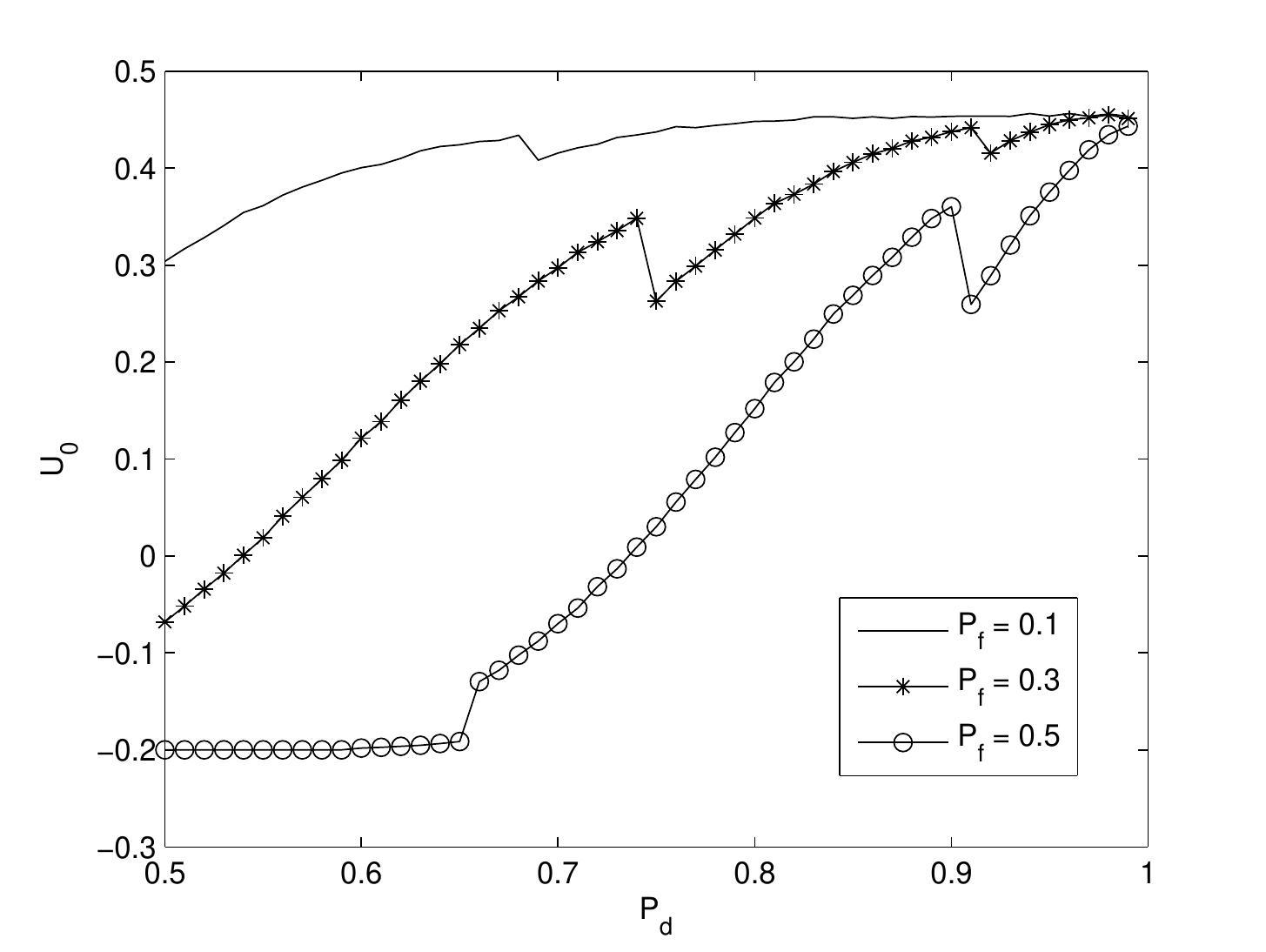}
    \caption{Utility of the moderator vs. local probability of detection, for a fixed $P_f$.}
    \label{Fig: U_mod_vs_Pd}
\end{figure}

We also assume that the valuation $t_i$ of $i^{th}$ CR is uniformly distributed over [0,1], i.e.  $\mathcal{U}[0,1], \ \forall i = 1, \cdots, N$. Also, since both the optimal allocation and CRs' payments depend on $\mathbf{t}$, we obtain the results for the proposed auction based on 10000 Monte-Carlo runs.

First, we investigate the behavior of the proposed auction in terms of the false-alarm and detection probabilities, $P_f$ and $P_d$ at the spectrum sensors embedded within the CRs. In order to illustrate this behavior, we assume $c_p = 0.02$ and $c_{coll} = 5$ in our simulation results. In Figure \ref{Fig: U_mod_vs_Pf}, we plot the expected utility of the moderator in terms of $P_f$, when the CRs' detection probability is fixed at $P_d = 0.5, 0.7, 0.9$. As per our intuition, the moderator's expected utility decreases with increasing $P_f$, since the moderator misses to detect available opportunities even though the PU channel is idle. Note that the staircase pattern is due to the change in the optimal value of $k$, which is given in Equation (\ref{Eqn: Optimal k}). Similarly, in Figure \ref{Fig: U_mod_vs_Pd}, we plot the variation of the moderator's expected utility with respect to the CRs' detection probability $P_d$, for different values of $P_f$. Note that the plot resembles an improving sawtooth curve, where the sudden drop in moderator's utility occurs whenever the value of the optimal $k$ decreases with increasing $P_d$. In other words, if $k$ decreases by one unit, a new term appears in $Q_d$, as given in Equation (\ref{Eqn: CR sensing performance}). Since $q_0$ is a decreasing function of $Q_d$, the expected utility of the moderator abruptly drops and then improves with increasing $P_d$ as long as the value of $k$ remains fixed. Also, note that the moderator's utility becomes negative whenever the CRs have very low sensing performance (low $P_d$, and high $P_f$) in both Figures \ref{Fig: U_mod_vs_Pf} and \ref{Fig: U_mod_vs_Pd}. This behavior can be explained by the feasibility condition presented in Theorem \ref{Thrm: Feasibility}.

\begin{figure}[!t]
	\centering
    \includegraphics[width=3.3in]{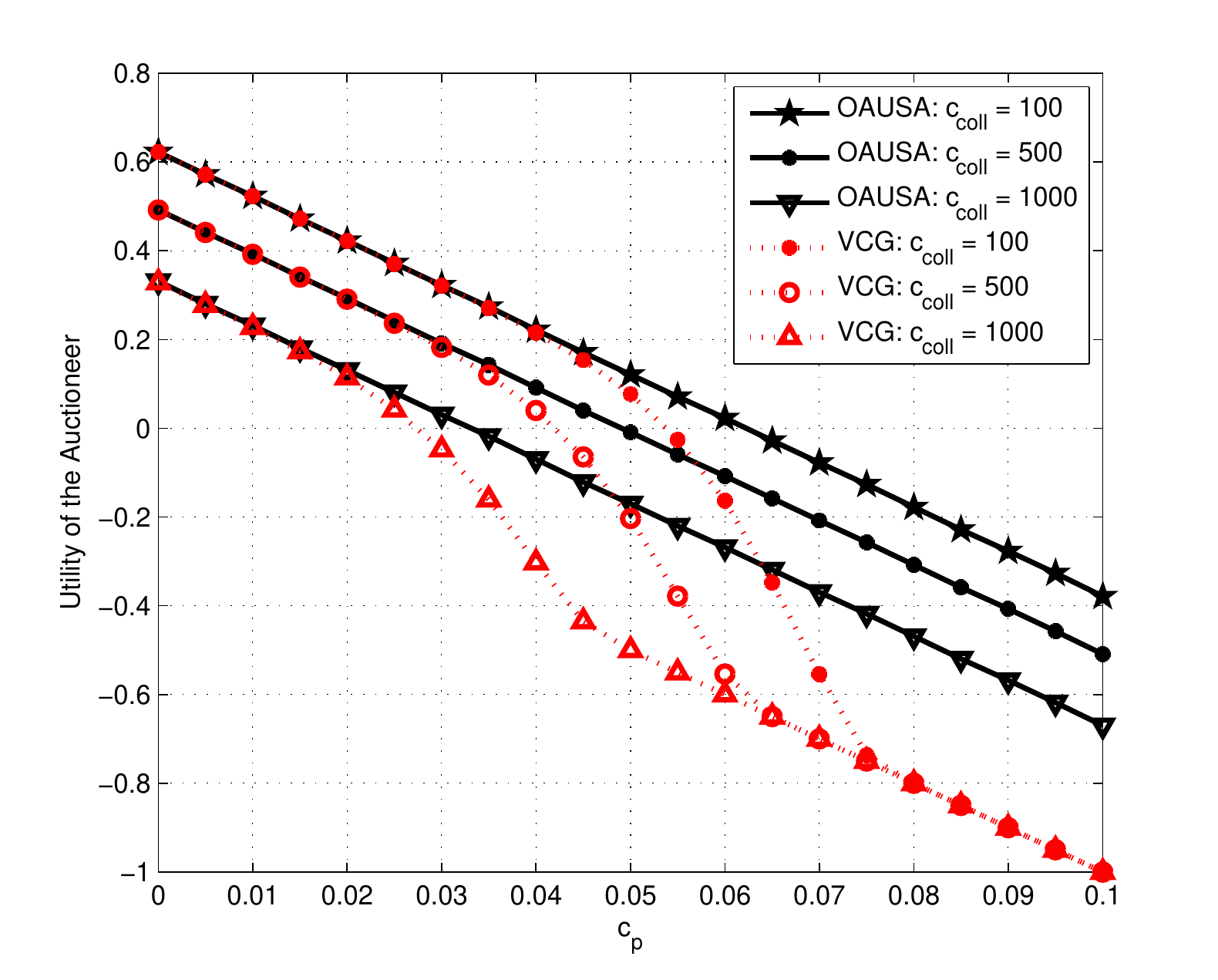}
    \caption{Utility of the moderator vs. participation cost at the CRs}
    \label{Fig: U_mod_vs_c_p}
\end{figure}

In our next set of simulations, we fix the probabilities of false alarm and detection at the CRs as $P_f = 0.1$ and $P_d = 0.9$. In Figure \ref{Fig: U_mod_vs_c_p}, we present the variation of the moderator's utility $U_0$, with respect to the sensing cost $c_p$. In our simulation, we fix the cost of collision to $c_{coll} = 100, 500, 1000$, and vary the cost of sensing $c_p$ over a range of $(0,0.1)$. 
Similarly, in Figure \ref{Fig: U_mod_vs_c_coll}, we present simulation results of the moderator's utility for the proposed auction in terms of the cost of collision $c_{coll}$. In our results, we fix the participation costs to $c_p = 0.01, 0.02, 0.05$. As expected, we observe that the moderator's utility decreases linearly with increase in $c_{coll}$. Similar to our simulation results in Figures \ref{Fig: U_mod_vs_Pf} and \ref{Fig: U_mod_vs_Pd}, as shown in Theorem \ref{Thrm: Feasibility}, we observe that the proposed auction becomes infeasible for high $c_p$ and $c_{coll}$. Also, the optimal auction always provides a higher utility at the moderator, when compared to the modified VCG auction as shown in Figures \ref{Fig: U_mod_vs_c_p} and \ref{Fig: U_mod_vs_c_coll}.

\begin{figure}[!t]
	\centering
    \includegraphics[width=3.3in]{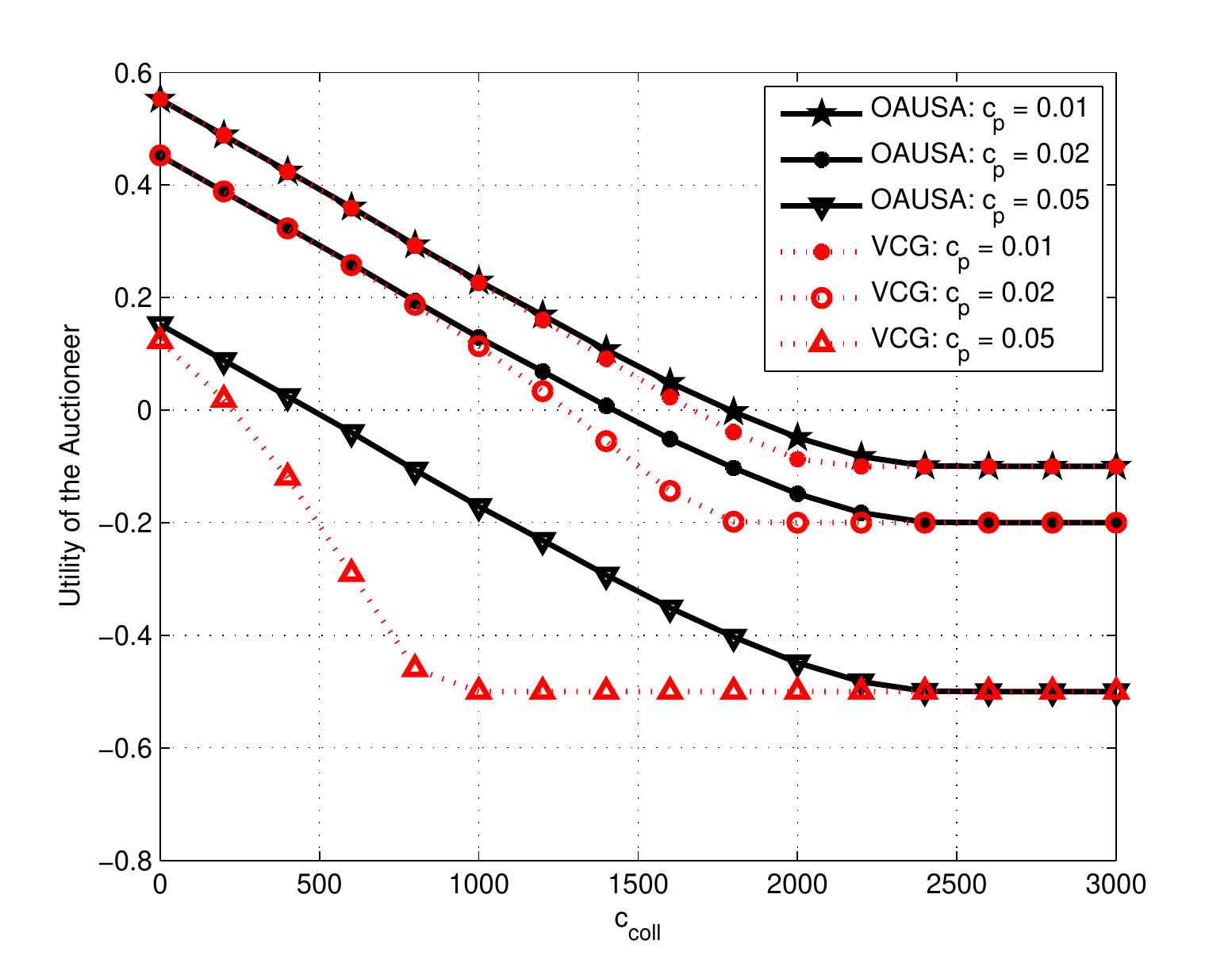}
    \caption{Utility of the moderator vs. collision penalty}
    \label{Fig: U_mod_vs_c_coll}
\end{figure}

\section{Conclusion and Future Directions \label{sec: Conclusion}}
In summary, we have designed the optimal auction for CR networks in the presence of uncertainty in the availability of PU spectrum at the moderator. We have considered cooperative spectrum sensing in our system model to mitigate the effects of uncertainty at the moderator and improve efficiency in spectrum utilization. Due to the presence of participation and collision costs, in addition to the spectrum uncertainty at the moderator, we have investigated necessary conditions under which our proposed auction is feasible. Furthermore, we have addressed the issue to truthful revelation of sensing decisions, making our auction mechanism strategy-proof in multiple dimensions. We have compared the performance of our proposed auction with traditional auctions such as VCG auction in terms of expected utility at the moderator and network throughput. Numerical results were also provided to illustrate the performance of the proposed auction under different scenarios. In our future work, we will investigate the problem of spectrum markets under uncertain spectrum availability where multiple PUs are present in the network. In addition, we will also investigate the case where a given spectrum can be spatially reallocated to different CRs in an interference aware manner.

\bibliographystyle{IEEEtran}
\bibliography{references}

\begin{thebibliography}{10}
\providecommand{\url}[1]{#1}
\csname url@samestyle\endcsname
\providecommand{\newblock}{\relax}
\providecommand{\bibinfo}[2]{#2}
\providecommand{\BIBentrySTDinterwordspacing}{\spaceskip=0pt\relax}
\providecommand{\BIBentryALTinterwordstretchfactor}{4}
\providecommand{\BIBentryALTinterwordspacing}{\spaceskip=\fontdimen2\font plus
\BIBentryALTinterwordstretchfactor\fontdimen3\font minus
  \fontdimen4\font\relax}
\providecommand{\BIBforeignlanguage}[2]{{%
\expandafter\ifx\csname l@#1\endcsname\relax
\typeout{** WARNING: IEEEtran.bst: No hyphenation pattern has been}%
\typeout{** loaded for the language `#1'. Using the pattern for}%
\typeout{** the default language instead.}%
\else
\language=\csname l@#1\endcsname
\fi
#2}}
\providecommand{\BIBdecl}{\relax}
\BIBdecl

\bibitem{Zhao2007}
Q.~Zhao and B.~Sadler, ``A survey of dynamic spectrum access,'' \emph{IEEE
  Signal Processing Magazine}, vol.~24, no.~3, pp. 79--89, May 2007.

\bibitem{MitolaPhD2000}
J.~Mitola, ``Cognitive radio: An integrated agent architecture for
  software-defined radio,'' Ph.D. dissertation, Royal Institute of Technology
  (KTH), Stockholm, Sweden, 2000.

\bibitem{FCC2013}
``Revision of part 15 of the commission's rules to permit unlicensed national
  information infrastructure (u-nii) devices in the 5 ghz band,'' Federal
  Communications Commission, Tech. Rep. 13-22, Feb. 20 2013.

\bibitem{Maharajan2011}
\BIBentryALTinterwordspacing
S.~Maharjan, Y.~Zhang, and S.~Gjessing, ``\BIBforeignlanguage{English}{Economic
  approaches for cognitive radio networks: A survey},''
  \emph{\BIBforeignlanguage{English}{Wireless Personal Communications}},
  vol.~57, no.~1, pp. 33--51, 2011. [Online]. Available:
  \url{http://dx.doi.org/10.1007/s11277-010-0005-9}
\BIBentrySTDinterwordspacing

\bibitem{Bogucka2012}
H.~Bogucka, M.~Parzy, P.~Marques, J.~Mwangoka, and T.~Forde, ``Secondary
  spectrum trading in tv white spaces,'' \emph{IEEE Communications Magazine},
  vol.~50, no.~11, pp. 121--129, 2012.

\bibitem{Zhang2013}
Y.~Zhang, C.~Lee, D.~Niyato, and P.~Wang, ``Auction approaches for resource
  allocation in wireless systems: A survey,'' \emph{IEEE Communications Surveys
  Tutorials}, vol.~15, no.~3, pp. 1020--1041, 2013.

\bibitem{Tragos2013}
E.~Tragos, S.~Zeadally, A.~Fragkiadakis, and V.~Siris, ``Spectrum assignment in
  cognitive radio networks: A comprehensive survey,'' \emph{IEEE Communications
  Surveys Tutorials}, vol.~15, no.~3, pp. 1108--1135, 2013.

\bibitem{Book-Krishna}
V.~Krishna, \emph{Auction Theory}.\hskip 1em plus 0.5em minus 0.4em\relax
  Academic Press, 2002.

\bibitem{Nadendla2012}
V.~S.~S. Nadendla, S.~Brahma, and P.~K. Varshney, ``An auction based mechanism
  for dynamic spectrum allocation in participatory cognitive radio networks,''
  in \emph{Proceedings of the 50th Annual Allerton Conference on Communication,
  Control, and Computing (ALLERTON-2012)}, Allerton, IL, 2012.

\bibitem{Li2013}
S.~Li, Z.~Zheng, E.~Ekici, and N.~B. Shroff, ``Maximizing social welfare in
  operator-based cognitive radio networks under spectrum uncertainty and
  sensing inaccuracy,'' in \emph{Proceedings of IEEE INFOCOM}, Turin, Italy,
  April 2013.

\bibitem{Blum2004}
\BIBentryALTinterwordspacing
A.~Blum, V.~Kumar, A.~Rudra, and F.~Wu, ``Online learning in online auctions,''
  \emph{Theoretical Computer Science}, vol. 324, no. 2–3, pp. 137 -- 146,
  2004, online Algorithms: In Memoriam, Steve Seiden. [Online]. Available:
  \url{http://www.sciencedirect.com/science/article/pii/S0304397504003755}
\BIBentrySTDinterwordspacing

\bibitem{Blum2005}
\BIBentryALTinterwordspacing
A.~Blum and J.~D. Hartline, ``Near-optimal online auctions,'' in
  \emph{Proceedings of the Sixteenth Annual ACM-SIAM Symposium on Discrete
  Algorithms}, ser. SODA '05.\hskip 1em plus 0.5em minus 0.4em\relax
  Philadelphia, PA, USA: Society for Industrial and Applied Mathematics, 2005,
  pp. 1156--1163. [Online]. Available:
  \url{http://dl.acm.org/citation.cfm?id=1070432.1070597}
\BIBentrySTDinterwordspacing

\bibitem{Tehrani2013}
M.~Tehrani and M.~Uysal, ``Auction based spectrum trading for cognitive radio
  networks,'' \emph{IEEE Communications Letters}, vol.~17, no.~6, pp.
  1168--1171, 2013.

\bibitem{Yucek2009}
T.~Yucek and H.~Arslan, ``A survey of spectrum sensing algorithms for cognitive
  radio applications,'' \emph{IEEE Communications Surveys \& Tutorials},
  vol.~11, no.~1, pp. 116--130, 2009.

\bibitem{Akyildiz2011}
\BIBentryALTinterwordspacing
I.~F. Akyildiz, B.~F. Lo, and R.~Balakrishnan, ``Cooperative spectrum sensing
  in cognitive radio networks: A survey,'' \emph{Physical Communication},
  vol.~4, no.~1, pp. 40--62, Mar. 2011. [Online]. Available:
  \url{http://dx.doi.org/10.1016/j.phycom.2010.12.003}
\BIBentrySTDinterwordspacing

\bibitem{Axell2012}
E.~Axell, G.~Leus, E.~Larsson, and H.~Poor, ``Spectrum sensing for cognitive
  radio : State-of-the-art and recent advances,'' \emph{IEEE Signal Processing
  Magazine}, vol.~29, no.~3, pp. 101--116, May 2012.

\bibitem{Myerson1981}
R.~B. Myerson, ``\BIBforeignlanguage{English}{Optimal auction design},''
  \emph{\BIBforeignlanguage{English}{Mathematics of Operations Research}},
  vol.~6, no.~1, pp. 58--73, 1981.

\bibitem{IEEE802.22}
\emph{Draft Standard for Wireless Regional Area Networks Part 22: Cognitive
  Wireless RAN Medium Access Control (MAC) and Physical Layer (PHY) spec.:
  Policies and procedures for operation in the TV Bands}, Std. IEEE
  P802.22/D0.1, May 2006.

\bibitem{Myerson1979}
R.~B. Myerson, ``Incentive compatibility and the bargaining problem,''
  \emph{Econometrica: Journal of the Econometric Society}, vol.~47, no.~1, pp.
  61--73, 1979.

\bibitem{Book-Varshney}
P.~K. Varshney, \emph{Distributed Detection and Data Fusion}.\hskip 1em plus
  0.5em minus 0.4em\relax Springer, 1996.

\bibitem{Akyildiz2006}
\BIBentryALTinterwordspacing
I.~F. Akyildiz, W.-Y. Lee, M.~C. Vuran, and S.~Mohanty, ``Next
  generation/dynamic spectrum access/cognitive radio wireless networks: A
  survey,'' \emph{Computer Networks}, vol.~50, no.~13, pp. 2127 -- 2159, 2006.
  [Online]. Available:
  \url{http://www.sciencedirect.com/science/article/pii/S1389128606001009}
\BIBentrySTDinterwordspacing

\bibitem{Nadendla-Thesis}
V.~S.~S. Nadendla, ``Secure distributed detection in wireless sensor networks
  via encryption of sensor decisions,'' Master's thesis, Louisiana State
  University, 2009.

\end{thebibliography}

\end{document}